\documentclass[12pt]{amsart}

\usepackage{newunicodechar}
\usepackage{quiverSDG}
\usepackage{fontawesome}
\usepackage{soul}
\usepackage{cancel}
\usepackage{xspace}
\usepackage{eucal}
\usepackage{colonequals}
\usepackage{graphicx}
\usepackage{afterpage}
\usepackage[pdfauthor={Riccardo Brasca, Gabriella Clemente},pdftitle={Synthetic Differential Geometry in Lean},linktocpage, colorlinks=true, allcolors=blue, pdftex, pdfpagelabels, bookmarks, hyperindex, hyperfigures]{hyperref}
\definecolor{darkergreen}{rgb}{0.0, 0.5, 0.0}
\definecolor{Blue}{RGB}{0,0,148}

\usepackage{listings}

\definecolor{light-gray}{gray}{0.95}

\usepackage{tikz}
\usetikzlibrary{positioning, arrows.meta, shapes.geometric}

\definecolor{keywordcolor}{rgb}{0.7, 0.1, 0.1}   
\definecolor{commentcolor}{rgb}{0.4, 0.4, 0.4}   
\definecolor{symbolcolor}{rgb}{0, 0, 0.8}    
\definecolor{tacticcolor}{rgb}{0, 0, 0.8}    
\definecolor{sortcolor}{rgb}{0.1, 0.5, 0.1}      
\newcommand*{\lean}[1]{\lstinline{#1}\xspace} 
\theoremstyle{plain}
\newtheorem{theorem}{Theorem}[subsection]

\newtheorem{corollary}[theorem]{Corollary}
\newtheorem*{principle}{Principle}
\newtheorem{lemma}[theorem]{Lemma}
\newtheorem{proposition}[theorem]{Proposition}

\theoremstyle{definition}
\newtheorem{definition}[theorem]{Definition}
\newtheorem{remark}[theorem]{Remark}

\newcommand*{\N}{\mathbb{N}}

\newcommand*{\Q}{\mathbb{Q}}

\DeclareMathOperator{\id}{id}

\DeclareMathOperator{\Hom}{Hom}
\DeclareMathOperator{\Alg}{Alg}
\DeclareMathOperator{\ev}{ev}

\DeclareMathOperator{\man}{\textbf{Man}}

\newcommand*{\mathlib}{\textsc{mathlib}\xspace} 
\newcommand*{\Lean}{\textsc{Lean}\xspace} 

\makeatletter
\newcommand\leanlink{\begingroup\catcode`\#=12\relax\@leanlink}
\newcommand\@leanlink[2]{\endgroup
\href{#1}
{\texttt{\detokenize{#2}}}}

\title{Synthetic Differential Geometry in Lean}

\author{Riccardo Brasca}
\email{\href{mailto:riccardo.brasca@imj-prg.fr}{riccardo.brasca@imj-prg.fr}}
\urladdr{\url{http://www.imj-prg.fr/~riccardo.brasca/}}
\address{Université Paris Cité and Sorbonne Université, CNRS, IMJ-PRG, F-75013 Paris, France}
\author{Gabriella Clemente}
\email{\href{mailto:gabriella.clemente@cnrs.fr}{gabriella.clemente@cnrs.fr}}
\urladdr{\url{https://sites.google.com/view/gclemente}}
\address{Université Paris Cité, CNRS, Institut de recherche en informatique fondamentale, Paris, France}

\date{\today}

\subjclass{18F40, 53B99, 68V20, 68V15}

\keywords{Synthetic differential geometry, Differential geometry, Formalization of Mathematics, mathlib, lean}

\begin{document}
\begin{abstract}
This article is about the formalization of synthetic differential geometry with the \Lean proof assistant and the mathematical library \mathlib. The main result we prove and formalize is a Taylor theorem for functions of several variables, where the series expansion is around an infinitesimal neighborhood. Most of our proofs are in fact new. Our investigations highlight the possibility of using \mathlib to do constructive mathematics.
\end{abstract}

\maketitle

\section*{Introduction}
Differential geometry (DG) exists in many forms. Our focus is synthetic differential geometry (SDG), which provides an axiomatic treatment of DG. The theme of our investigations is the formalization of SDG in \Lean, specifically the synthetic differential calculus from \cite{SDG}, which is covered in Part I of that book. We would like to highlight that some of the results proven here do not appear in \cite{SDG}, and for those that do appear, our proofs are new. Our work can also be viewed as an experiment aimed at understanding whether \Lean and its mathematical library \mathlib can be effectively used in a constructive setting. In principle, \Lean is not inherently a classical system. For instance, it can be used to develop fully constructive theories such as \href{https://github.com/sinhp/HoTTLean}{homotopy type theory}. However, \mathlib has largely been written without any systematic attempt to avoid classical reasoning. One of the goals of our project is therefore to investigate whether nontrivial constructive mathematics can be carried out using the existing infrastructure of \mathlib, without requiring large portions of the library to be rewritten. Our aim is not to provide a definitive answer to this question. Rather, we show through concrete examples that it is indeed possible to work with \Lean and \mathlib in a more constructive style. Achieving this, however, requires some additional effort, and the amount of work involved is likely to grow significantly when relying on more advanced parts of the library.

The project is available on \href{https://github.com/riccardobrasca/SDG}{GitHub}. Most mathematical statements and definitions will be accompanied by a direct link to the source code for the corresponding statement in \mathlib or in our repository. To keep the links usable, they all point to a fixed commit of the master branch (the most recent one at the time of writing).

The mathematical motivation of our work is to understand the applications of formalized SDG for studying classical DG. In principle, formalized SDG could serve as a conjecture-making and conjecture-proving aid in local DG. Compared to classical DG, SDG is computationally less involved, and this can simplify the process of conjecturing synthetic preliminary versions of DG theorems. Formalization can mechanize the generalization and unification of results, which can be particularly useful when technicalities are too complex to work out by hand. Although we do not address these potential applications of formalized SDG, we believe that the topic deserves careful consideration.

The context of DG is the category of smooth manifolds and smooth maps between them. In contrast, SDG takes place in a smooth topos \cite{smooth_topos}, which is a special kind of Grothendieck topos. We very briefly discuss the definition. Let $R$ be a commutative ring with unity. An $R$-algebra $W$ is a Weil algebra if there is a nilpotent ideal $J\subseteq W$, finitely generated as an $R$-module, such that $W \simeq R \oplus J$ as $R$-modules. For instance, the $R$-algebra ${R[x]}/{\langle x^{k+1} \rangle}$ is a Weil algebra since
\[
{R[x]}/{\langle x^{k+1} \rangle} \simeq R \oplus (Rx \oplus Rx^2 \oplus \dots \oplus R x^k)
\]

A Grothendieck topos $\mathcal{E}$ is smooth if

\begin{enumerate}
\item it has a distinguished commutative unital ring object $R$,

\item for each Weil algebra $W$ the functor $(\cdot)^{\Hom_{R-\Alg}(W,R)}:\mathcal{E} \to \mathcal{E}$ has a right adjoint, and

\item if the following axiom schema, known as the Kock--Lawvere axiom schema, holds true:

for any Weil algebra $W$, the evaluation map
\begin{gather*}
\ev \colon W \to R^{\Hom_{R-\Alg}(W,R)} \\
w \mapsto (f \mapsto \ev_w (f) \colonequals f(w))
\end{gather*}
is an isomorphism of $R$-algebras.
\end{enumerate}

SDG can be seen as the study of smooth topoi. We will work in the internal logic of a smooth topos, and in particular our theory is an intuitionistic one. An infinitesimal object is an object of the form
$\Hom_{R-\Alg}(W,R)$ for some Weil algebra $W$ over $R$. Note that a smooth topos contains all infinitesimal objects. The only infinitesimal objects we deal with here are the $k^{th}$ order ones,
\[
D_k \simeq \Hom_{R-\Alg} \big({R[x]}/{\langle x^{k+1} \rangle},R \big)
\]
where $k\in \N$; see Definition \ref{SDG.D}. The most important case is the first one, $D \colonequals D_1$, which is associated to the dual numbers ${R[x]}/{\langle x^2 \rangle}$. In classical terms, $D$ should be thought of as an infinitely small interval $(-\epsilon,\epsilon) \subseteq \mathbb{R}$. The infinitesimal object $D$ is used in defining the tangent bundle of any object $X$ of a smooth topos $\mathcal{E}$, which is $ T_X\colonequals X^D$. The tangent bundle $T_X$ should therefore be thought of as the space of all infinitesimal paths in $X$. Indeed, $T_X$ is always an object of $\mathcal{E}$ as $\mathcal{E}$ is a Cartesian closed category.

Smooth topoi are related to the category $\man$ of smooth manifolds via well adapted models. A smooth topos is called a well adapted model if it is equipped with a fully faithful functor $\man \hookrightarrow \mathcal{E}$ with the property that $\mathbb{R} \mapsto R$. We plan to investigate the formalization of well adapted models in a forthcoming work.

The organization of the article is as follows. In section \ref{SvsO} we compare the synthetic and classical differential calculus theories. In section \ref{implementation}, we explain the fundamentals of formalizing SDG in \Lean. In section \ref{basics} we introduce all of the concepts that will be needed later on. In section \ref{1var} we develop the synthetic differential calculus theory in $1$ variable. Finally, in section \ref{multi-var} we treat the several variables theory.

\subsection*{Acknowledgments}
We acknowledge Robin Arnez, Paul Lezeau and Damiano Testa for helping us in writing various scripts that made this work easier to complete. We would also like to thank the whole \Lean community, especially Anatole Dedecker, Jovan Gerbscheid, Bhavik Mehta, Filippo Nuccio and Shuhao Song, for various useful discussions. Riccardo Brasca is supported by the ANR project FALSE (ANR-25-CE40-7639). Gabriella Clemente is supported by the European Research Council (ERC) under the European Union's Ninth Framework Programme Horizon Europe (ERC Synergy Project Malinca, Grant Agreement n. 101167526).

\section{Synthetic versus ordinary differential calculus}\label{SvsO}
\setcounter{subsection}{1}

The main distinction between synthetic and ordinary differential calculus can be summarized as follows: in synthetic differential calculus, functions coincide with their Taylor polynomials on infinitesimal neighborhoods, whereas in ordinary differential calculus Taylor polynomials only provide approximations to functions. The infinitesimal neighborhoods of interest to us involve $D_k$ as well as sums and products thereof. The equality between Taylor polynomials and functions is a consequence of the Kock--Lawvere axiom schema. The instance we use is captured by Definition \ref{IsKockLawvere}. Intuitively, this Kock--Lawvere axiom says that all maps $D_k \to R$ are polynomials of degree $k$. In particular, on $D$, all maps are affine linear.

Let $U \subseteq \mathbb{R}^n$ be an open subset, $f \colon U \to \mathbb{R}$ be a function, and $k\geq 1$ be an integer. Recall that $f$ is said to be of class $C^k$, and we write $f\in C^k(U,\mathbb{R})$, if the first $k$ partial derivatives of $f$ exist and are continuous: for every multi-index $\alpha=(\alpha_0,\dots,\alpha_{n-1}) \in \N^n$ such that $\alpha_0+\dots+\alpha_{n-1}\leq k$,
\[
\frac{\partial^{\alpha_0+\dots+\alpha_{n-1}} f}{\partial x^{\alpha_0}_0 \dots \partial x^{\alpha_{n-1}}_{n-1}}
\]
exists and is continuous on $U$. We say that $f$ is smooth, and write $f\in C^{\infty}(U,\mathbb{R})$, if for all integers $k\geq 1$, $f\in C^k(U,\mathbb{R})$. Note that if $l \leq k$, then $C^k(U,\mathbb{R}) \subseteq C^l(U,\mathbb{R})$.

\begin{theorem}{(Taylor's theorem in ordinary differential calculus)}\label{TaylorThm}
Let $f\in C^k(U,\mathbb{R})$. For any $l\leq k$ and any $x_0\in U$,

\begin{gather*}
f(x_0+h)=\\
\sum_{0\leq \alpha_0+\dots+\alpha_{n-1} \leq l} \frac{1}{\alpha_0! \dots \alpha_{n-1}!}\frac{\partial^{\alpha_0+\dots+\alpha_{n-1}} f}{\partial x^{\alpha_0}_0 \dots \partial x^{\alpha_{n-1}}_{n-1}}(x_0) h^{\alpha_0}_0 \dots h^{\alpha_{n-1}}_{n-1}+ o(\|h\|^l)
\end{gather*}
as $\|h\| \to 0$
\end{theorem}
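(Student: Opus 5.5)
I would reduce to the one-variable Taylor theorem with Peano remainder by restricting $f$ to line segments, and then control the remainder uniformly in the direction.

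\textbf{Step 1: reduction to one variable.} Since $U$ is open, choose $r>0$ with the closed ball $\overline{B}(x_0,r)\subseteq U$. For $\|h\|<r$ set
\[
g(t)\colonequals f(x_0+th), \qquad t\in[0,1].
\]
Because $f\in C^k$ with $k\geq l$, the chain rule applied inductively shows that $g\in C^l$ on a neighborhood of $[0,1]$, with
\[
g^{(j)}(t)=\sum_{\alpha_0+\dots+\alpha_{n-1}=j}\frac{j!}{\alpha_0!\cdots\alpha_{n-1}!}\,\partial^\alpha f(x_0+th)\,h_0^{\alpha_0}\cdots h_{n-1}^{\alpha_{n-1}} .
\]
Continuity of the partial derivatives up to order $l$ is exactly what lets us apply the chain rule at each stage, via the fact that continuous partials imply differentiability. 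Symmetry of mixed partials (Schwarz) collects equal monomials, which produces the multinomial coefficient.

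\textbf{Step 2: integral form of the remainder.} Apply the one-variable Taylor formula with integral (or Lagrange) remainder to $g$ at order $l-1$:
\[
g(1)=\sum_{j=0}^{l-1}\frac{g^{(j)}(0)}{j!}+\int_0^1\frac{(1-t)^{l-1}}{(l-1)!}\,g^{(l)}(t)\,dt .
\]
Then write
\[
g^{(l)}(t)=g^{(l)}(0)+\bigl(g^{(l)}(t)-g^{(l)}(0)\bigr).
\]
The $g^{(l)}(0)$ term integrates to $g^{(l)}(0)/l!$. Dividing the formula for $g^{(j)}(0)$ by $j!$ gives exactly the terms of the stated sum with $\alpha_0+\dots+\alpha_{n-1}=j$.

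\textbf{Step 3: uniform remainder estimate (main obstacle).} The remaining error is
\[
\int_0^1\frac{(1-t)^{l-1}}{(l-1)!}\sum_{|\alpha|=l}\frac{l!}{\alpha!}\bigl(\partial^\alpha f(x_0+th)-\partial^\alpha f(x_0)\bigr)h^\alpha\,dt .
\]
Each monomial satisfies $|h^\alpha|\leq\|h\|^l$, and there are finitely many multi-indices. So the error is bounded by $C\,\|h\|^l\,\sup_{|\alpha|=l,\ \|y-x_0\|\leq\|h\|}|\partial^\alpha f(y)-\partial^\alpha f(x_0)|$. By continuity of the order-$l$ partials at $x_0$, this supremum tends to $0$ as $\|h\|\to 0$, which gives $o(\|h\|^l)$.

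The delicate points are the following.
\begin{itemize}
\item The error must be uniform in the direction $h/\|h\|$. Bounding it by a supremum over a ball, rather than along a single ray, handles this.
\item Identifying $g^{(j)}$ with the multinomial sum requires the inductive chain rule together with the symmetry of mixed partials.
\item The case $l=0$ is just continuity of $f$ at $x_0$ and is handled separately.
\end{itemize}
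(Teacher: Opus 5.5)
The paper does not prove this theorem. It is recalled in Section~\ref{SvsO} only as the classical benchmark for the synthetic Theorem~\ref{taylor_multi}, so there is no argument to match and yours has to stand on its own. It does: this is the standard proof, and it is correct.

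The points that need care are all handled:
\begin{itemize}
\item For $\|h\|<r$, the map $t\mapsto x_0+th$ stays in $B(x_0,r)\subseteq U$ on an open interval containing $[0,1]$.
\item The chain rule can be iterated, because each $\partial^\beta f$ with $|\beta|\le l-1$ has continuous first partials and is therefore differentiable.
\item Schwarz's theorem for $C^l$ functions groups the $j!/\alpha!$ ordered tuples $(i_1,\dots,i_j)$ with multiplicity vector $\alpha$, which gives the multinomial coefficients.
\item Since $\int_0^1\frac{(1-t)^{l-1}}{(l-1)!}\,dt=\frac{1}{l!}$ and $\sum_{|\alpha|=l}\frac{l!}{\alpha!}=n^l$, the error is at most $\frac{n^l}{l!}\|h\|^l\max_{|\alpha|=l}\sup_{\|y-x_0\|\le\|h\|}|\partial^\alpha f(y)-\partial^\alpha f(x_0)|$. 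This is $o(\|h\|^l)$ using only continuity of the order-$l$ partials at $x_0$.
\end{itemize}
The bound $|h^\alpha|\le\|h\|^{|\alpha|}$ uses $|h_i|\le\|h\|$; for an arbitrary norm it holds up to a harmless constant.

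Your reduction differs from the one the paper uses for the synthetic analogue, Proposition~\ref{taylor_multi_aux}. There the passage to one variable is made one coordinate at a time, by induction on $n$. Theorem~\ref{taylor_k} is applied to $y\mapsto f(\tilde x+\tilde d,y)$, the induction hypothesis to the functions $y\mapsto\partial_n^i f(y,x_n)$, and Proposition~\ref{partials_commute} plays the role of your appeal to Schwarz.

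Each route fits its own setting:
\begin{itemize}
\item Your restriction to the segment $t\mapsto x_0+th$ does not transfer directly to the synthetic setting. It evaluates a one-variable expansion at the increment $t=1$, which lies in no $D_m$.
\item Classically, a coordinatewise induction would require controlling the inner remainders uniformly in the remaining variables. Your single-segment reduction produces just one remainder, and you get its uniformity in the direction $h/\|h\|$ from the supremum over the ball.
\item In the synthetic setting this issue disappears, because every expansion there is an exact identity.
\end{itemize}
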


The natural number $l$ is called the order of the Taylor polynomial approximation. Let us now compare the classical picture with the synthetic one. The main objective of the paper is to prove and formalize Theorem \ref{taylor_multi}, which is a synthetic counterpart of Taylor's theorem for multivariate functions (Theorem \ref{TaylorThm}). But we also formalize the following : Theorem \ref{taylor_one}, Lemma \ref{taylor_two_aux}, Proposition \ref{taylor_two}, Proposition \ref{taylor_k_aux} and Theorem \ref{taylor_k}, all of which are $1$-variable special cases of Theorem \ref{taylor_multi}; and Theorem \ref{taylor_one_multi} as well as Proposition \ref{taylor_multi_aux}, which are several variables precursors of Theorem \ref{taylor_multi}.

Now we are able to revisit the first remark made in this section. Whereas Theorem \ref{TaylorThm} is about approximating $C^k$ functions by Taylor polynomials of degree at most $k$, the above mentioned synthetic theorems assert the equality between functions and their Taylor polynomials. Precisely, in the $1$-variable case, we have Theorem \ref{taylor_one}, stating that on a $D$-neighborhood, a function $f\colon R \to R$ equals its degree $1$ Taylor polynomial. Compare this with the linear approximation of a $C^k$ function of one variable, for $k\geq 1$, furnished by the case $n=l=1$ of Theorem \ref{TaylorThm}. Lemma \ref{taylor_two_aux} says that on an infinitesimal sumset neighborhood, i.e. the $(D+D)$-neighborhood, $f$ equals its order $2$ Taylor polynomial, while the case $n=1, l=2$ of Theorem \ref{TaylorThm} gives only a quadratic Taylor approximation. Indeed, Lemma \ref{taylor_two_aux} generalizes to Proposition \ref{taylor_two} and Proposition \ref{taylor_k_aux}, upon replacing $D+D$ with $D_2$, respectively $\underbrace{D+\dots +D}_\text{$k$ times}$. More generally, according to Theorem \ref{taylor_k}, $f$ coincides with its order $k$ Taylor polynomial on a $D_k$-neighborhood. Compare that with the case $n=1, l=k$ of Theorem \ref{TaylorThm}. Let $(k_0,\dots,k_{n-1}) \in \N^n$. We finally come to Theorem \ref{taylor_multi}, which states that a function of $n$ variables equals its order $k_0+\dots+k_{n-1}$ Taylor polynomial on the infinitesimal $(D_{k_0} \times \dots \times D_{k_{n-1}})$-neighborhood. The main calculatory steps in the proof are carried out in Theorem \ref{taylor_k} and Proposition \ref{taylor_multi_aux}. 

\section{Logical framework and \Lean implementation} \label{implementation}
\subsection{The axiom of unique choice}
In this section we describe the logical framework of this work and explain how we implemented it in practice using the \Lean proof assistant and its mathematical library \mathlib. Our main object of study is a class of rings satisfying the so-called Kock--Lawvere property (see Definition~\ref{IsKockLawvere} below). In standard mathematics, using the law of excluded middle, it is easy to prove that such rings do not exist (see Subsection~\ref{subsection: excluded middle} for details) and in particular our work is incompatible with classical reasoning.

In \cite{SDG}, the author uses set theory without the law of excluded middle as a foundation for SDG. This work, on the other hand, uses the logical framework of the \Lean proof assistant: it is a version of dependent type theory with inductive types, a non-cumulative hierarchy of universes, including a non-predicative universe of propositions and proof irrelevance. The interested reader can consult \cite{mario} for more details (note that \cite{mario} is about \Lean3 and we use here \Lean4, but the theory is essentially the same). \Lean's standard library adds three axioms to this framework:
\begin{itemize}
  \item \leanlink{https://github.com/riccardobrasca/lean4/blob/43af88cf7c255c45c30ae0f79dd5f47e0a3ccab7/src/Init/Core.lean#L1582-L1591}{propext}: two equivalent propositions are equal.
  \item \leanlink{https://github.com/riccardobrasca/lean4/blob/43af88cf7c255c45c30ae0f79dd5f47e0a3ccab7/src/Init/Core.lean#L1770-L1787}{Quot.sound}: enables to form quotients of types by equivalence relations.
  \item \leanlink{https://github.com/riccardobrasca/lean4/blob/43af88cf7c255c45c30ae0f79dd5f47e0a3ccab7/src/Init/Prelude.lean#L767-L789}{Classical.choice}: the axiom of choice.
\end{itemize}
The first two axioms are harmless in constructive mathematics. The axiom of choice, on the other hand, is incompatible with constructivist reasoning: indeed in \Lean, the principle of excluded middle is \href{https://github.com/riccardobrasca/lean4/blob/43af88cf7c255c45c30ae0f79dd5f47e0a3ccab7/src/Init/Classical.lean#L35-L67}{proved} using the axiom of choice, following an argument due to Diaconescu (see the first corollary in \cite{diaconescu})\footnote{Diaconescu proved that a weak form of the axiom of choice is equivalent to the law of excluded middle; see Theorem \cite{diaconescu}.}. In particular, we need here to avoid any use of the axiom of choice.

The Kock--Lawvere property for a ring $R$ states that, given a function $g \colon D \to R$, there is a unique element $b \in R$ such that, for all $d \in D$, a certain property $P(b, d)$ holds. The element $b$ depends of course on $g$, and the first step in the theory is to view this dependency as a function from $R^R$ to $R$ (the actual construction is slightly more involved since it starts with an element $x \in R$ and it considers the function $d \mapsto f(x + d)$, but this does not matter here). The main point is that we need the following:
\begin{principle}
Let $P$ be any property depending on two variables $x \in X$ and $y \in Y$. Suppose that for all $x \in X$ there is a unique $y \in Y$ such that $P(x, y)$ holds. We obtain a function $f \colon X \to Y$ such that for all $x \in X$ we have that $P(x, f(x))$ holds (and additionally $f(x)$ is the unique element of $Y$ with this property).
\end{principle}
In set theory functions are not a primitive notion and are defined by their graph. In the setting above, using the axiom of specification, one immediately gets a subset $S \subseteq X \times Y$ by considering the pairs $(x, y) \in X \times Y$ such that $P(x,y)$: this subset is the function $f$. In particular, the above principle trivially holds in set theory. In \Lean's type theory, functions are a primitive notion and they can be constructed essentially only via lambda abstraction. Since there is no way to extract a witness from an existential statement (more precisely, large elimination of a proposition is incompatible with proof irrelevance), the above construction is not doable without an additional axiom. Building $f$ using \Lean's standard library is of course very easy, using the axiom of choice (that allows to extract a witness from a non-empty type). However, as we explained above, we cannot use it. Instead, we use the following \href{https://github.com/riccardobrasca/SDG/blob/1a787e306e89ce6ac0e91ddbcbf489eedeab32b7/SDG/Axiom/UniqueChoice.lean#L44-L46}{weaker axiom}, called \emph{the axiom of unique choice}: it allows to extract a witness from statements of unique existence (in the following $\alpha$ and $\beta$ are arbitrary types).
\begin{lstlisting}
axiom axiom_unique_choice (h : ∃! (_ : α), True) : α
\end{lstlisting}
We then proceed to prove various immediate corollaries that in particular allow one to select an element having a given property provided that such element exists and is unique. At the end we obtain the \href{https://github.com/riccardobrasca/SDG/blob/1a787e306e89ce6ac0e91ddbcbf489eedeab32b7/SDG/Axiom/UniqueChoice.lean#L65-L82}{following definition and results} that correspond to the principle stated above and are the only consequences of the axiom of unique choice that we will use in our development.
\newpage
\begin{lstlisting}
/-- Given a property `P : α → β → Prop` such that for all `a : α`, there exists a unique `b : β` with `P a b`, then `unique_choice_fun h` gives a function `α → β` selecting this unique `b` for each `a`. -/
noncomputable def unique_choice_fun {P : α → β → Prop}
    (h : ∀ a, ∃! b, P a b) : α → β := ...

lemma unique_choice_fun_spec {P : α → β → Prop}
    (h : ∀ a, ∃! b, P a b) (a : α) : P a (unique_choice_fun h a) := ...

lemma unique_choice_fun_unique {P : α → β → Prop}
  (h : ∀ a, ∃! b, P a b) {a : α} {b : β} (hb : P a b) :
    unique_choice_fun h a = b := ...
\end{lstlisting}

\subsection{\Lean implementation}
We make extensive use of \Lean's standard library and the mathematical library \mathlib (see \cite{mathlib}). Notably, in both libraries very little effort is made to limit the use of the axiom of choice to proofs where it is really needed and this makes our work more delicate.

The \Lean command \lean{#print axioms decl} prints the axioms the declaration \lean{decl} transitively depends on, making it possible to check that we never use the axiom of choice in our theory. Checking by hand every declaration is impractical, so we use a \href{https://github.com/riccardobrasca/SDG/blob/main/SDG/Linters/choice.lean}{linter}, written essentially by Damiano Testa, that emits a warning when a declaration depends on the axiom of choice, similarly to the built-in linter that flags uses of the \leanlink{https://github.com/riccardobrasca/lean4/blob/43af88cf7c255c45c30ae0f79dd5f47e0a3ccab7/src/Init/Prelude.lean#L702-L718}{SorryAx} axiom.

The need to avoid the axiom of choice makes some proofs more involved than usual, especially since advanced tactics freely use it and it is difficult to control their behavior (in practice the only thing one can do is to split the proof into smaller steps until the tactic does not use the axiom of choice anymore). For example \lean{grind} (a tactic inspired by modern SMT solvers) always proves statements by contradiction and in particular every proof made by it uses the axiom of choice. In practice this means that \lean{grind} is unusable in our context. More importantly, this means that every declaration that transitively depends on a result proved using \lean{grind} depends on the axiom of choice, and therefore cannot be used in our development. Since \lean{grind} is used a lot in the library, this implies that the axiom of choice can be a dependency of many (sometimes totally unexpected) declarations. For example, we discovered that the mere fact of writing \lean{(2 : R)}, where $R$ is a ring, used the axiom of choice. The reason was that in order for \Lean to write \lean{(2 : R)}, an instance of the typeclass \lean{OfNat R 2} is needed, which is found via the following \href{https://github.com/riccardobrasca/mathlib4/blob/916dfa8806825933261fe83d11b52c11c2fbd0bd/Mathlib/Data/Nat/Cast/Defs.lean#L37-L44}{instance}. This declaration requires \lean{Nat.AtLeastTwo 2} again found via \href{https://github.com/riccardobrasca/mathlib4/blob/916dfa8806825933261fe83d11b52c11c2fbd0bd/Mathlib/Data/Nat/Init.lean#L451-L454}{type class resolution}. The (mathematically trivial) proof that, given a non-zero natural number $n$, then $n + 1 \ge 2$ was done using the \lean{lia} tactic (a weaker version of \lean{grind}) and in particular depended on the axiom of choice. This has since been \href{https://github.com/leanprover-community/mathlib4/pull/32467}{fixed} in \mathlib by Anatole Dedecker.

On the other hand, tactics like \lean{ring} (used to perform algebraic manipulations) or \lean{zify} (a tactic that allows to cast statements about natural numbers to integers) are safer to use, so we do not have to write extremely verbose proofs, especially when dealing with algebraic computations. Still, the \lean{ring} tactic unpredictably used the axiom of choice when dealing with numeric literals.

As it will become apparent below, we perform quite a lot of manipulations with finite sums and products, so we make extensive use of this part of \mathlib. One realizes very quickly that almost all results about sums use the axiom of choice, and it is not always easy to understand the reason. Indeed \mathlib is extremely intricate, for example, we may need a declaration \lean{X} that depends on choice, so the first idea is just to provide \lean{X'}, the same statement but with a choice-free proof. The problem with this approach is that usually the proof of \lean{X} is very short, most likely a trivial application of \lean{Y} and \lean{Z}, which also depend on choice and whose proofs are again trivial applications of \lean{Y'} and \lean{Z'} and so on. In practice it is unfeasible to find by hand the place where the axiom of choice is really used, and even if one finds it, it is then necessary to rewrite the entire chain of proofs that depend on it (such a chain can be very long even for mathematically easy results). For example, we discovered that the instance \leanlink{https://github.com/riccardobrasca/mathlib4/blob/916dfa8806825933261fe83d11b52c11c2fbd0bd/Mathlib/Data/Fintype/Basic.lean#L36-L37}{Fintype (Fin n)} (the fact that the set $0, \ldots, n - 1$ is finite) depends on various trivial arithmetic results, some of them proved using \lean{grind} and in particular it uses the axiom of choice. Providing a new instance by hand (together with various related results) is possible, but it requires to rewrite dozens of declarations, and this approach cannot scale.

The solution we adopted is to use a custom version of \mathlib that depends less on the axiom of choice. To explain how this works in practice, let us consider the declaration \leanlink{https://github.com/riccardobrasca/mathlib4/blob/916dfa8806825933261fe83d11b52c11c2fbd0bd/Mathlib/Data/Multiset/UnionInter.lean#L158-L159}{Multiset.union_add_distrib}, distributivity of union over addition for multisets. Currently, in \mathlib, the proof is as follows:
\newpage
\begin{lstlisting}
lemma union_add_distrib (s t u : Multiset α) :
    s ∪ t + u = s + u ∪ (t + u) := by
  simpa [(· ∪ ·), union, eq_comm, Multiset.add_assoc, Multiset.add_left_inj] using
    show s + u - (t + u) = s - t by
      rw [t.add_comm, Multiset.sub_add_eq_sub_sub, Multiset.add_sub_cancel_right]
\end{lstlisting}
One then realizes that, say, \leanlink{https://github.com/riccardobrasca/mathlib4/blob/916dfa8806825933261fe83d11b52c11c2fbd0bd/Mathlib/Data/Multiset/AddSub.lean#L334}{Multiset.sub_add_eq_sub_sub} depends on choice, and its proof is:
\begin{lstlisting}
lemma sub_add_eq_sub_sub : s - (t + u) = s - t - u := by
    ext; simp [Nat.sub_add_eq]
\end{lstlisting}
Its dependence on choice is hidden in the use of \lean{simp} (a tactic that uses various results in the library to automatically simplify statements and hypotheses). This gets quickly out of hand. To better visualize the problem, we wrote a \href{https://github.com/riccardobrasca/SDG/blob/main/scripts/graph.lean}{script}, together with the assistance of Claude Sonnet 4.6, that creates the graph of dependencies on choice for a given declaration. The graph starts with the given declaration and ends with the axiom of choice, erasing all transitive dependencies that do not involve the axiom of choice. In the case of \leanlink{https://github.com/riccardobrasca/mathlib4/blob/916dfa8806825933261fe83d11b52c11c2fbd0bd/Mathlib/Data/Multiset/UnionInter.lean#L158-L159}{Multiset.union_add_distrib}, the graph is shown below.
\begin{figure}[p]
  \centering
  \hspace*{-2.3cm}\includegraphics[height=1\textheight,keepaspectratio]{graph.pdf}
\end{figure}
One can clearly identify a zone (at the bottom of the graph) of declarations that are equivalent to the axiom of choice and should be avoided. In practice one can either disconnect a node from \leanlink{https://github.com/riccardobrasca/mathlib4/blob/916dfa8806825933261fe83d11b52c11c2fbd0bd/Mathlib/Data/Multiset/UnionInter.lean#L158-L159}{Multiset.union_add_distrib} (making its proof avoid that node) or remove the node from the graph (modifying its own proof to avoid choice). The first strategy is sometimes easier but the second one is more effective, as the modification will propagate more through the library. In the specific case of \leanlink{https://github.com/riccardobrasca/mathlib4/blob/916dfa8806825933261fe83d11b52c11c2fbd0bd/Mathlib/Data/Multiset/UnionInter.lean#L158-L159}{Multiset.union_add_distrib}, we for example avoid the use of \leanlink{https://github.com/riccardobrasca/mathlib4/blob/916dfa8806825933261fe83d11b52c11c2fbd0bd/Mathlib/Data/Multiset/AddSub.lean#L127}{Multiset.add_left_inj} (this declaration is discussed in more detail below) in the proof, but we keep \leanlink{https://github.com/riccardobrasca/lean4/blob/43af88cf7c255c45c30ae0f79dd5f47e0a3ccab7/src/Init/Data/List/Count.lean#L407-L425}{List.count_erase} (in the core library), thus making this declaration choice-free. To give a more complex example, we discovered that the instance \leanlink{https://github.com/riccardobrasca/mathlib4/blob/916dfa8806825933261fe83d11b52c11c2fbd0bd/Mathlib/Algebra/Field/Rat.lean#L27-L36}{Rat.instField}, the fact that rational numbers form a field, also depends on the axiom of choice. Identifying the source of this dependency by hand is extremely difficult: for instance, even the bare definition of addition on $\Q$ turns out to depend on choice. The graph for \leanlink{https://github.com/riccardobrasca/mathlib4/blob/916dfa8806825933261fe83d11b52c11c2fbd0bd/Mathlib/Algebra/Field/Rat.lean#L27-L36}{Rat.instField} can be found \href{https://github.com/riccardobrasca/SDG/blob/main/scripts/Rat.instField.svg}{here}. Although it is very intricate, it is still easy to identify a small subset of nodes whose removal suffices to make this instance choice-free. In this particular case, it was enough to change the proof of just six declarations in core, all about trivial arithmetic lemmas (for example \leanlink{https://github.com/riccardobrasca/lean4/blob/43af88cf7c255c45c30ae0f79dd5f47e0a3ccab7/src/Init/Data/Nat/Lemmas.lean#L267-L268}{Nat.le_iff_lt_add_one}), and \leanlink{https://github.com/riccardobrasca/mathlib4/blob/916dfa8806825933261fe83d11b52c11c2fbd0bd/Mathlib/Data/Rat/Defs.lean#L151}{Rat.zero_ne_one}. None of these declarations posed any particular difficulty. Interestingly, in several cases it was enough to split a double implication proved by \lean{omega} (a tactic that automatically solves goals about natural numbers and integers) into its two directions, both again proved by \lean{omega}, to eliminate the dependence on choice. This somewhat ad hoc approach was sufficient in practice for our project.

The \mathlib version we use can be found \href{https://github.com/leanprover-community/mathlib4/pull/35685/}{here}, and it can be used in the standard way by modifying the \texttt{lakefile.toml} file (as done in our \href{https://github.com/riccardobrasca/SDG/blob/main/lakefile.toml}{project}). Note that since certain declarations we needed to tweak are in \Lean's core library and in \texttt{batteries} (one of the \Lean basic \href{https://github.com/leanprover-community/batteries}{libraries}), we also use custom versions of both (see \href{https://github.com/riccardobrasca/batteries}{here} and \href{https://github.com/riccardobrasca/lean4/}{here}). In the end, this yielded a version of \mathlib in which more than eight thousands declarations that previously depended on the axiom of choice are now choice-free, particularly in the area of finite sets and rational numbers.

We now describe various interesting issues we encountered while trying to avoid the axiom of choice.
\begin{enumerate}
    \item As explained above we had to rewrite proofs using \lean{grind} or \lean{lia}. This made certain proofs quite verbose but no serious problem arose. Similarly, certain proofs started with \lean{classical} (that allows the automatic use of choice-dependent declarations) just to keep them short.
    \item We found proofs written by contradiction that could be easily rewritten in a direct way.
    \item We discovered that commutativity of addition of multisets used the axiom of choice. The reason is that the \leanlink{https://github.com/riccardobrasca/mathlib4/blob/916dfa8806825933261fe83d11b52c11c2fbd0bd/Mathlib/Algebra/Order/Group/Multiset.lean#L37-L42}{AddCommMonoid} instance on \lean{Multiset α}
    was defined directly inside the \href{https://github.com/leanprover-community/mathlib4/blob/d291837015bcbbe131b3076f4cd226cbbc60793c/Mathlib/Algebra/Order/Group/Multiset.lean#L37-L43}{AddCancelCommMonoid} instance, and cancellativity of addition uses choice (in a non-fundamental but harder to remove way). Our solution here was simply to separate the two instances without modifying any proof.
    \item Proofs regarding families of multisets were more interesting. For example, declaration \leanlink{https://github.com/riccardobrasca/mathlib4/blob/916dfa8806825933261fe83d11b52c11c2fbd0bd/Mathlib/Data/Multiset/Bind.lean#L429-L438}{Multiset.Nodup.sigma} says that the dependent product of a family of multisets without duplicates over a multiset without duplicates is again without duplicates. The idea of the proof is basically to go through lists as multisets are by definition equivalence classes of lists up to permutations. For a single multiset there is no need to choose a representative (that would require choice), and one can argue by induction on the quotient. However, this statement involves a family of multisets, i.e. a function to \lean{Multiset α} and it is not possible to lift such a function to \lean{List α} without choice. Such proofs had to be genuinely rewritten, something that was not always easy.
    \item An interesting case is \leanlink{https://github.com/riccardobrasca/mathlib4/blob/916dfa8806825933261fe83d11b52c11c2fbd0bd/Mathlib/Algebra/Order/BigOperators/Group/Finset.lean#L105-L110}{Finset.sum_le_sum} and related declarations: the proof in \mathlib uses the axiom of choice to obtain the instance \lean{DecidableEq ι}, and it seems difficult to avoid this. The set of indices we consider in this work is a finite product of natural numbers, so it already has decidable equality, and it is not difficult to provide a choice-free alternative to \leanlink{https://github.com/riccardobrasca/mathlib4/blob/916dfa8806825933261fe83d11b52c11c2fbd0bd/Mathlib/Algebra/Order/BigOperators/Group/Finset.lean#L105-L110}{Finset.sum_le_sum} that works in our setting. On the other hand, it is not clear how to integrate this into \mathlib, since our version is technically weaker, having an additional assumption. Modifying \mathlib would require assuming decidable equality in all theorems and definitions depending on \leanlink{https://github.com/riccardobrasca/mathlib4/blob/916dfa8806825933261fe83d11b52c11c2fbd0bd/Mathlib/Algebra/Order/BigOperators/Group/Finset.lean#L105-L110}{Finset.sum_le_sum}, thereby changing many declarations. Our solution, although not completely satisfactory, is to provide in our project \leanlink{https://github.com/riccardobrasca/SDG/blob/1a787e306e89ce6ac0e91ddbcbf489eedeab32b7/SDG/Axiom/BigOperators.lean#L17-L25}{SDG.Finset.sum_le_sum}, a version of \leanlink{https://github.com/riccardobrasca/mathlib4/blob/916dfa8806825933261fe83d11b52c11c2fbd0bd/Mathlib/Algebra/Order/BigOperators/Group/Finset.lean#L105-L110}{Finset.sum_le_sum} that does not use choice, and ignore the declaration in \mathlib.
    \item We often use structural induction in our proofs, and in some \href{https://github.com/riccardobrasca/SDG/blob/1a787e306e89ce6ac0e91ddbcbf489eedeab32b7/SDG/IsKockLawvere/Taylor.lean#L142-L143}{cases} the termination goals generated by the \lean{decreasing_by} clause were discharged automatically by \lean{grind}. In these cases, we had to provide the termination proofs manually.
    \item Even though the \href{https://github.com/riccardobrasca/mathlib4/blob/916dfa8806825933261fe83d11b52c11c2fbd0bd/Mathlib/Algebra/Field/Rat.lean#L27-L36}{instance} \lean{Field ℚ} is choice-free in our version of \mathlib, the declaration \leanlink{https://github.com/riccardobrasca/mathlib4/blob/916dfa8806825933261fe83d11b52c11c2fbd0bd/Mathlib/Algebra/GroupWithZero/Basic.lean#L365-L379}{GroupWithZero.toDivisionMonoid} still depends on the axiom of choice, fundamentally because it performs the case split $a = 0$ or $a \neq 0$. As a result, the instance \lean{DivisionMonoid ℚ} (found by type class resolution) also depends on choice in practice, and one cannot use results such as \leanlink{https://github.com/riccardobrasca/mathlib4/blob/916dfa8806825933261fe83d11b52c11c2fbd0bd/Mathlib/Algebra/Group/Defs.lean#L1124-L1126}{mul_inv_rev} directly. We therefore add various instances to short-circuit type class resolution and avoid introducing the axiom of choice.
    \item We managed to avoid changing any definition in \mathlib and to avoid using the axiom of unique choice there, keeping our version as close as possible to the standard library.
\end{enumerate}
During the writing of this paper, Bhavik Mehta and Jovan Gerbscheid (after a discussion with Mario Carneiro) suggested to us a different approach that is more maintainable in the long term: instead of modifying existing declarations in \mathlib, one can reprove them and use a custom command, similar to \lean{#print axioms}, that, during the checking of the axioms used, instructs \Lean to ignore the original declaration and consider the new one instead. This way, we can keep the original library intact and maintain a separate file with choice-free versions of the necessary declarations. This approach is more maintainable since it allows one to keep the original library up to date and to easily identify the declarations that we need to reprove. We then discovered that this approach was already implemented by Shuhao Song, together with various tools to tweak definitions; see \href{https://github.com/znssong/Frucht/blob/main/Frucht/Classical.lean}{here} for details.

\subsection{Excluded middle} \label{subsection: excluded middle} We conclude this section with a discussion about the law of excluded middle and its incompatibility with synthetic differential geometry.

The law of excluded middle (LEM) asserts that for any proposition $p$, either $p$ or $\neg p$ is true. In particular this implies that one can reason by cases, and in particular one can define functions specifying the image of any $x$ in the case where $p(x)$ holds and in the case where $\neg p(x)$ holds (here $p$ is a property of $x$).
\begin{lemma}{(Exercise 1.1 \cite{SDG})}
Definition \ref{IsKockLawvere} is inconsistent with LEM.
\end{lemma}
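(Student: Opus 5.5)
The plan is the standard argument: use excluded middle to build a function $g \colon D \to R$ that cannot be affine linear, which contradicts the Kock--Lawvere property of Definition \ref{IsKockLawvere}. Recall that $D = \{d \in R \mid d^2 = 0\}$, and that the Kock--Lawvere axiom says every $g\colon D \to R$ has the form $g(d) = g(0) + b\,d$ for a \emph{unique} $b \in R$. I take $R$ to be a nontrivial ring, $0 \neq 1$, which is part of the standing hypotheses of Definition \ref{IsKockLawvere} (or else a consequence of it).

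\textbf{Step 1: $D$ is not $\{0\}$.} Apply the uniqueness part of the axiom to the zero function $g = 0$. Both $b = 0$ and $b = 1$ would satisfy $0 = 0 + b\,d$ for all $d \in D$ if every $d\in D$ were $0$. Uniqueness would then give $0 = 1$, a contradiction. Hence it is not the case that every $d \in D$ equals $0$. Note that this step is intuitionistically valid and uses no LEM.

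\textbf{Step 2: a non-linear function, using LEM.} By LEM, for every $d \in D$ either $d = 0$ or $d \neq 0$. So I can define $g\colon D \to R$ by $g(d) = 1$ if $d = 0$ and $g(d) = 0$ otherwise. By Kock--Lawvere, there is $b$ with $g(d) = 1 + b\,d$ for all $d \in D$. Apply LEM once more to the statement ``there exists $d \in D$ with $d \neq 0$''.
\begin{itemize}
\item If no such $d$ exists, then every $d \in D$ satisfies $\neg\neg(d=0)$, hence $d = 0$ by LEM. This contradicts Step 1.
\item If some $d \neq 0$ lies in $D$, then $0 = g(d) = 1 + b\,d$, so $b\,d = -1$. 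Multiplying by $d$ gives $-d = b\,d^2 = 0$, so $d = 0$, a contradiction.
\end{itemize}
Either way we reach $\bot$, which proves the lemma.

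The only delicate point is Step 1. One must extract the nondegeneracy of $D$ from the uniqueness clause of the axiom, and this needs $0 \neq 1$ in $R$. If Definition \ref{IsKockLawvere} does not include nontriviality, the statement should be read for nontrivial $R$: the zero ring satisfies Kock--Lawvere trivially, even under LEM, so without $0 \neq 1$ there is no inconsistency to derive. The rest of the argument is a routine case analysis enabled by LEM.
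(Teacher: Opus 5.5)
Your proof is correct and follows essentially the same route as the paper: use LEM to define a function on $D$ that separates $0$ from the nonzero elements, use the uniqueness clause together with $0\neq 1$ to get $D\neq\{0\}$, pick a nonzero $d\in D$, and derive a contradiction from $g(d)=g(0)+bd$ and $d^2=0$. The differences are minor. You swap the two values of $g$ and conclude $d=0$ rather than $0=1$, and you make explicit the second use of LEM needed to pass from $D\neq\{0\}$ to an actual nonzero element, which the paper leaves implicit.
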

\begin{proof}
Consider the function $g:D \to R$ with definition
\[
g(d)=\begin{cases}0 \mbox{ if }d=0 \\
1 \mbox{ if }d\neq 0,\end{cases}
\]
which uses LEM. Indeed, $D\neq \{0\}$ as $D=\{0\}$ contradicts the uniqueness of $b$ in Definition \ref{IsKockLawvere}.

Let $d_0 \in D$ be non-zero. Then,
\[
1=g(d_0)=g(0)+d_0 b=d_0 b
\]
However, this means that $0=d^2_0 b^2=1$, which is impossible since $0\neq 1$ in $R$.
\end{proof}
The formalization of the previous argument in \Lean is \href{{https://github.com/riccardobrasca/SDG/blob/1a787e306e89ce6ac0e91ddbcbf489eedeab32b7/SDG/IsKockLawvere_one/EM.lean#L19-L30}}{straightforward}.

\section{Basics}\label{basics}
Throughout the entire article, all indexing starts at $0$, e.g. if $r\in R^n$, then $r = (r_0,\dots,r_{n-1})$. This convention is to keep things consistent with our \Lean code.

\subsection{Basic definitions and results}

Let $R$ be a commutative ring. We define the $k^{th}$ order \href{https://github.com/riccardobrasca/SDG/blob/1a787e306e89ce6ac0e91ddbcbf489eedeab32b7/SDG/Basic/Defs.lean#L21-L24}{infinitesimal object} as follows.
\begin{definition}
 \label{SDG.D}
For any natural number $k$, we denote by $D_k$ the subsemigroup of $R$
\[
\{r \in R \mbox{ such that } r ^ {k+1} = 0 \}.
\]

We use the \href{https://github.com/riccardobrasca/SDG/blob/1a787e306e89ce6ac0e91ddbcbf489eedeab32b7/SDG/Basic/Defs.lean#L26-L27}{convention}
\[
D \colonequals D_1=\{r \in R \mbox{ such that } r ^ 2 = 0 \}
\]
\end{definition}
The following lemmas are obvious (see \href{https://github.com/riccardobrasca/SDG/blob/1a787e306e89ce6ac0e91ddbcbf489eedeab32b7/SDG/Basic/Defs.lean#L40-L41}{here} and \href{https://github.com/riccardobrasca/SDG/blob/1a787e306e89ce6ac0e91ddbcbf489eedeab32b7/SDG/Basic/D.lean#L21-L39}{here}).
\begin{lemma}
  \label{SDG.zero_mem_D}
  We have that $0 \in D_k$ for all $k$.
\end{lemma}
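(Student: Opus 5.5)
We argue directly from Definition \ref{SDG.D}: membership of $0$ in $D_k$ amounts to checking that $0^{k+1}=0$ in $R$. Since $k+1\geq 1$ for every natural number $k$, we write $k+1$ as the successor of $k$ and use the defining recursion of the power, $r^{k+1}=r^k\cdot r$. Specialising to $r=0$ gives $0^{k+1}=0^k\cdot 0$, and this vanishes because $x\cdot 0=0$ holds in any ring. No case analysis on $k$ and no induction is needed, because the exponent is syntactically a successor.

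In the formal setting this corresponds to applying the \mathlib lemma stating $0^{n}=0$ for $n\neq 0$, or more directly $0^{n+1}=0$ (\lean{zero_pow} or \lean{pow_succ} followed by \lean{mul_zero}). Once the goal is unfolded to $0^{k+1}=0$, we close it with that lemma.

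The only point requiring care is not mathematical but foundational. As explained in Section \ref{implementation}, we must make sure the lemma invoked does not transitively depend on the axiom of choice. Some forms of \lean{zero_pow} take a hypothesis $n\neq 0$ whose discharge might go through \lean{omega} or \lean{simp} in a choice-dependent way. We would therefore first try the route via \lean{pow_succ} and \lean{mul_zero}. These rest only on the monoid recursion and ring axioms, so they should be choice-free, and the linter can confirm this. If even that introduces choice, we would unfold the definition of \lean{Monoid.npow} by hand.
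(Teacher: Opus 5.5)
Your argument is correct and matches the paper, which states this lemma without proof as obvious (linking only to the formal source): membership of $0$ in $D_k$ reduces to $0^{k+1}=0^k\cdot 0=0$, exactly as you write. Your remarks about keeping the formal proof choice-free are reasonable but go beyond what the paper says about this particular lemma.
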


\begin{lemma} \label{closure_prod}
Let $x,y \in R$. If either $x$ or $y$ belongs to $D_k$, then the product of $x$ and $y$ belongs to $D_k$.
\end{lemma}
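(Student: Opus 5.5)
The argument is a direct computation in the commutative ring $R$. The key identity is that powers distribute over products in a commutative ring: for every natural number $m$,
\[
(xy)^{m} = x^{m} y^{m}.
\]
We apply it with $m = k+1$ and then split into the two cases allowed by the hypothesis. This disjunction is given to us as a hypothesis, so eliminating it is constructively valid and no excluded middle is needed.

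\emph{Case $x \in D_k$.} By Definition \ref{SDG.D} this means $x^{k+1}=0$. Then
\[
(xy)^{k+1} = x^{k+1}y^{k+1} = 0\cdot y^{k+1} = 0,
\]
so $xy \in D_k$.

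\emph{Case $y \in D_k$.} Symmetrically, $y^{k+1}=0$ and $(xy)^{k+1} = x^{k+1}\cdot 0 = 0$. Alternatively, commutativity gives $xy = yx$, which reduces this case to the first one.

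There is no genuine mathematical obstacle here. The only point needing care is the formalization: the proof should use \lean{mul_pow} together with \lean{zero_mul}/\lean{mul_zero}, and it should avoid automation such as \lean{grind} or \lean{lia}. As explained in Section \ref{implementation}, those tactics would introduce a dependence on the axiom of choice. A direct term-mode or \lean{rw}-based proof stays choice-free.
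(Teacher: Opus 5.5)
Your proposal is correct and takes the same approach as the paper. The paper calls this lemma obvious and gives no written proof; the intended argument is exactly your computation $(xy)^{k+1}=x^{k+1}y^{k+1}=0$, using whichever factor is assumed to lie in $D_k$ (and, as you note, case analysis on a given disjunction needs no excluded middle).
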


\begin{lemma}
    \label{containDk}
    For all $k \leq \ell$, we have that $D_k \subseteq D_{\ell}$.
\end{lemma}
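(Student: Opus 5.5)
Take $r \in D_k$, so $r^{k+1}=0$, and let $k \le \ell$. The goal is $r^{\ell+1}=0$. I would factor the higher power through the known vanishing one: since $k \le \ell$, write $\ell+1 = (k+1) + (\ell-k)$. Then
\[
r^{\ell+1} = r^{k+1}\, r^{\ell-k} = 0 \cdot r^{\ell-k} = 0 .
\]
This is the whole argument. It uses only the exponent law $r^{a+b}=r^a r^b$ in a commutative monoid and $0\cdot x = 0$ in a ring, so it is fully constructive.

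An alternative is to note that $r^{\ell+1} = r^{k+1}\cdot r^{\ell-k}$ and then apply Lemma~\ref{closure_prod} with $x = r^{k+1}$. I prefer the direct computation, since it avoids reasoning about membership in $D_k$ for a power of $r$.

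The only step needing any care is the one that is trivial on paper: the arithmetic identity $\ell+1 = (k+1)+(\ell-k)$ with truncated natural-number subtraction. The text warns that \lean{omega}, \lean{lia} or \lean{grind} may silently introduce the axiom of choice. To sidestep this, I would not subtract at all. Instead I would destructure $k \le \ell$ as $\ell = k + m$ for some $m$ (the definition of $\le$ on $\N$, via \lean{Nat.exists_eq_add_of_le} or induction on the proof of $\le$). Then $\ell+1 = (k+1)+m$ holds by associativity and commutativity of addition, $r^{\ell+1} = r^{k+1} r^m$ follows by \lean{pow_add}, and the result follows by \lean{zero_mul}. Each of these should be choice-free.

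If that still pulled in choice, I would fall back to induction on $m$: $r^{k+1+(m+1)} = r^{k+1+m}\cdot r$ by \lean{pow_succ}, and this is $0\cdot r = 0$ by the inductive hypothesis.
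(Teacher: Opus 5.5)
Your argument is correct and is the same one the paper relies on: the paper calls the lemma obvious and gives no written proof beyond the link to the \Lean code, and writing $\ell=k+m$ to get $r^{\ell+1}=r^{k+1}r^{m}=0$ is exactly that obvious argument. Your aside about Lemma~\ref{closure_prod} only yields $r^{\ell+1}=0$ if you apply it at level $0$, using $D_0=\{0\}$; the direct computation makes it unnecessary anyway.
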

\begin{lemma} \label{square}
Let $x, y \in D$. Then $(x+y)^2=2xy$.
\end{lemma}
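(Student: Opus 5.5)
The plan is to expand the square directly, using only the ring axioms and the defining property of $D$ from Definition \ref{SDG.D}. Since $R$ is commutative, the binomial identity gives
\[
(x+y)^2 = x^2 + 2xy + y^2 ,
\]
where $2xy$ means $xy + xy$ (equivalently $(1+1)xy$). This needs only distributivity and commutativity: $(x+y)(x+y) = x\cdot x + x\cdot y + y\cdot x + y\cdot y$, followed by $yx = xy$. Because $x, y \in D$, we have $x^2 = 0$ and $y^2 = 0$. Substituting these two equalities kills the outer terms and leaves exactly $2xy$.

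The key steps, in order, are:
\begin{enumerate}
\item expand $(x+y)^2$ into $x^2 + 2xy + y^2$;
\item rewrite with the hypotheses $x^2 = 0$ and $y^2 = 0$;
\item simplify $0 + 2xy + 0$ to $2xy$.
\end{enumerate}
The argument is entirely constructive: there is no case analysis and no use of excluded middle, so nothing in it conflicts with the intuitionistic framework of Section \ref{implementation}.

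There is no genuine mathematical obstacle. The only delicate point is a formalization issue. As explained in Section \ref{implementation}, the numeral $2$ in $R$ and the \lean{ring} tactic's treatment of numeric literals could pull in the axiom of choice. To sidestep this, I would state and prove the expansion with $xy + xy$ in place of $2xy$, or derive it from $2 = 1 + 1$ and distributivity by explicit rewriting. Only at the end would I identify $xy + xy$ with $2 \cdot xy$.
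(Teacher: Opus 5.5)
Your proof is correct and is the argument the paper has in mind: the paper lists this lemma among those it calls obvious and defers to the formal proof, which amounts to expanding $(x+y)^2 = x^2 + 2xy + y^2$ in the commutative ring $R$ and using $x^2 = y^2 = 0$. Your caution about the numeral $2$ matches the paper's own discussion, though the paper notes that this particular dependency on choice has since been fixed in \mathlib.
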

The following \href{https://github.com/riccardobrasca/SDG/blob/1a787e306e89ce6ac0e91ddbcbf489eedeab32b7/SDG/Basic/D.lean#L48-L56}{lemma}, even if completely elementary, is the key technical point that will render the proofs of most results in section \ref{1varsection} more direct than those in \cite{SDG}.
\begin{lemma}\label{synthle}
For any $k$, if $x \in D$ and $y \in R$, we have
\[
(x+y)^{k+1}= (k+1)x y^k+y^{k+1}
\]
\end{lemma}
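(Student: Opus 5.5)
The natural route is induction on $k$, using only $x^2=0$ and the ring axioms of the commutative ring $R$. A direct binomial expansion would also work, since every term with $x^j$, $j\geq 2$, vanishes. In a constructive, choice-free setting, however, plain induction with \lean{ring} is cleaner: it avoids binomial coefficients and finite sums, which the discussion above shows can drag in the axiom of choice.

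For the base case $k=0$ the claim reads $(x+y)^1 = 1\cdot x\cdot y^0 + y^1$, which is immediate. For the inductive step, assume $(x+y)^{k+1}=(k+1)xy^k+y^{k+1}$. Then
\[
(x+y)^{k+2}=(x+y)\bigl((k+1)xy^k+y^{k+1}\bigr)=(k+1)x^2y^k+xy^{k+1}+(k+1)xy^{k+1}+y^{k+2}.
\]
Since $x\in D$, the first term vanishes. The remaining terms combine to $(k+2)xy^{k+1}+y^{k+2}$, which is the statement for $k+1$.

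There is no real mathematical obstacle. The only delicate point is the Lean side. The coefficient $(k+1)$ is a natural-number cast into $R$, so we must use \lean{Nat.cast_succ} (or an equivalent) to rewrite $((k+1)+1 : R)$ as $(k+1 : R)+1$. We must also substitute $x^2=0$ explicitly, for instance by rewriting $x\cdot x\cdot(\ldots)$ as $x^2\cdot(\ldots)$ via \lean{ring_nf}, or by proving the intermediate identity with \lean{ring} and then applying \lean{hx : x^2 = 0}. This keeps the proof free of the axiom of choice.
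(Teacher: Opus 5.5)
Your proof is correct and follows the paper's argument exactly: induction on $k$, with a trivial base case, and in the inductive step you multiply the hypothesis by $(x+y)$, expand, drop the $(k+1)x^2y^k$ term because $x^2=0$, and collect the remaining terms into $(k+2)xy^{k+1}+y^{k+2}$.
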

\begin{proof}
This is an induction argument. The claim is trivial when $k=0$. Assuming that the claim is true for a given $k$, we have that
\begin{gather*}
        (x+y)^{k+2}=\\
        (x+y)((k+1)x y^k+y^{k+1})=\\
        (k+1)x^2y^k+xy^{k+1}+(k+1)xy^{k+1}+y^{k+2} =\\
        (k+2)xy^{k+1}+y^{k+2}
\end{gather*} since $x^2=0$.
\end{proof}
The \href{https://github.com/riccardobrasca/SDG/blob/1a787e306e89ce6ac0e91ddbcbf489eedeab32b7/SDG/Basic/D.lean#L58-L72}{following} will also be useful in the next section.
\begin{lemma}\label{synthzero}
For any $k$ and $b_0,\dots,b_{k-1} \in D$, we have
\[
\left( \sum_{i < k} b_i \right)^{k+1}=0
\]
\end{lemma}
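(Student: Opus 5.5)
We argue by induction on $k$, using Lemma \ref{synthle} as the only nontrivial input. The cleaner statement to carry through the induction is the stronger one: for any $k$ and $b_0,\dots,b_{k-1} \in D$, the sum $s_k \colonequals \sum_{i<k} b_i$ satisfies $s_k^{k+1}=0$. This is exactly the claim, so no strengthening is actually needed. The point is that Lemma \ref{synthle} lets us peel off one infinitesimal summand at a time.

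For the base case $k=0$, the sum is empty, so $s_0=0$ and $s_0^{1}=0$ trivially. For the inductive step, assume the claim for $k$ and let $b_0,\dots,b_k\in D$. Write $s_{k+1} = b_k + s_k$, where $s_k=\sum_{i<k} b_i$. We cannot apply the induction hypothesis directly to $s_{k+1}^{k+2}$, so we first expand. Apply Lemma \ref{synthle} with $x=b_k\in D$, $y=s_k$ and exponent index $k+1$:
\[
s_{k+1}^{k+2} = (b_k+s_k)^{k+2} = (k+2)\, b_k\, s_k^{k+1} + s_k^{k+2}.
\]
By the induction hypothesis $s_k^{k+1}=0$, hence also $s_k^{k+2}=s_k\cdot s_k^{k+1}=0$. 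Both terms vanish, giving $s_{k+1}^{k+2}=0$.

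The only delicate point is bookkeeping rather than mathematics. In the formalization, one has to split $\sum_{i<k+1}$ as the last term plus the sum over $i<k$ (or the first term plus a shifted sum), matching the indexing convention. One also has to check that the natural-number cast $(k+2)$ causes no trouble, since it is simply multiplied by zero. The induction hypothesis must be stated for arbitrary families $b$, or at least for the restricted family $b_0,\dots,b_{k-1}$. Neither step requires case analysis, so the argument is constructive and choice-free.
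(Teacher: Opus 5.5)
Your proof is correct and follows essentially the same route as the paper: induction on $k$, with Lemma~\ref{synthle} applied at exponent $k+2$ to split off one infinitesimal summand, after which the induction hypothesis makes both resulting terms vanish. The only cosmetic difference is that the paper splits off $b_0$ and uses the shifted sum $\sum_{i<k} b_{i+1}$, whereas you split off the last term $b_k$; either choice works equally well.
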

\begin{proof}
We prove the lemma by induction. If $k=0$ the sum is empty and there is nothing to prove. Assume that the lemma holds for $k$. Then, using Lemma \ref{synthle}, we have that
\begin{gather*}
        \left(\sum_{i < k+1} b_i\right)^{k+2}=\left(b_0 + \sum_{i < k} b_{i+1}\right)^{k+2}\\
        =(k+2)b_0\left(\sum_{i < k} b_{i+1}\right)^{k+1} + \left(\sum_{i < k} b_{i+1}\right)^{k+2} = 0
\end{gather*}
\end{proof}
\begin{lemma} \label{sum_pow_eq_mul_prod}
For any $k$ and $b_0,\dots,b_{k-1} \in D$, we have
\[
\left( \sum_{i < k} b_i \right)^k=k!\prod_{i < k} b_i
\]
\end{lemma}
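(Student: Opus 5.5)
We argue by induction on $k$, in the same style as the proofs of Lemma \ref{synthle} and Lemma \ref{synthzero}. For $k=0$ both sides equal $1$: the empty sum raised to the power $0$ is $1$, and $0!$ times the empty product is $1$.

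For the inductive step, assume the statement holds for $k$ and all families in $D$. Take $b_0,\dots,b_k \in D$ and split off the first term, exactly as in the proof of Lemma \ref{synthzero}:
\[
\sum_{i<k+1} b_i = b_0 + s, \qquad s \colonequals \sum_{i<k} b_{i+1}.
\]
Applying Lemma \ref{synthle} with $x=b_0\in D$ and $y=s$ (with exponent $k+1$) gives
\[
(b_0+s)^{k+1} = (k+1)\,b_0\, s^{k} + s^{k+1}.
\]
The second term vanishes by Lemma \ref{synthzero} applied to the $k$ elements $b_1,\dots,b_k\in D$. The inductive hypothesis applied to the same shifted family gives
\[
s^k = k!\prod_{i<k} b_{i+1}.
\]
Therefore
\[
(b_0+s)^{k+1} = (k+1)\,k!\; b_0\prod_{i<k} b_{i+1} = (k+1)!\prod_{i<k+1} b_i.
\]
The last equality uses the identity $\prod_{i<k+1} b_i = b_0\prod_{i<k} b_{i+1}$, which is \texttt{Fin.prod\_univ\_succ} or its range analogue.

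No step here is mathematically hard. The main obstacle is bookkeeping in the formalization: matching the index shift $i\mapsto i+1$ in the sum and the product, and handling the casts $(k+1)\cdot k! = (k+1)!$ in $R$ via \texttt{Nat.factorial\_succ} and \texttt{Nat.cast\_mul}. All of this must be done without the choice-dependent lemmas discussed in Section \ref{implementation}. We would first try the \lean{Fin}-indexed sums with \lean{Fin.sum_univ_succ} and \lean{Fin.prod_univ_succ}. The induction hypothesis must be stated for arbitrary families $b$, so that it can be applied to the shifted family $i \mapsto b_{i+1}$.
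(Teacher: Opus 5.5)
Your proof is correct and follows essentially the same route as the paper's: induction on $k$, splitting off $b_0$, applying Lemma~\ref{synthle}, killing the term $s^{k+1}$ with Lemma~\ref{synthzero}, and using the inductive hypothesis on the shifted family $i \mapsto b_{i+1}$ to get $(k+1)\,b_0\,k!\prod_{i<k} b_{i+1} = (k+1)!\prod_{i<k+1} b_i$.
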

\begin{proof}
We prove the lemma by induction. If $k=0$ the sum and the product are empty and there is nothing to prove (we use the convention that $0^0 = 1$). Assume that the lemma holds for $k$. Then, using Lemmas \ref{synthle} and \ref{synthzero}, we have that
\begin{gather*}
        \left(\sum_{i < k+1} b_i\right)^{k+1}=\left(b_0 + \sum_{i < k} b_{i+1}\right)^{k+1} =\\
        (k+1)b_0\left(\sum_{i < k} b_{i+1}\right)^k + \left(\sum_{i < k} b_{i+1}\right)^{k+1} = \\
        (k+1)b_0k!\prod_{i < k} b_{i+1} = (k+1)!\prod_{i < k+1} b_i
\end{gather*}
\end{proof}

\begin{definition} \label{IsKockLawvere}
We say that $R$ is \href{https://github.com/riccardobrasca/SDG/blob/1a787e306e89ce6ac0e91ddbcbf489eedeab32b7/SDG/Basic/Defs.lean#L53-L56}{$1$-Kock--Lawvere} if it is nontrivial and for all $g \colon D \to R$ there exists a unique $b \in R$ such that $g(d) = g(0) + b d$ for all $d \in D$. We say moreover that $R$ is \href{https://github.com/riccardobrasca/SDG/blob/1a787e306e89ce6ac0e91ddbcbf489eedeab32b7/SDG/Basic/Defs.lean#L58-L62}{Kock--Lawvere} if it is nontrivial and if for all $k$ and all $g \colon D_k \to R$ there exist unique $b_0,\dots, b_{k-1} \in R$ such that, for all $d \in D_k$, we have
\[
g(d) = g(0) + \sum_{i < k} b_i d^{i+1}
\]
Obviously, being Kock--Lawvere \href{https://github.com/riccardobrasca/SDG/blob/1a787e306e89ce6ac0e91ddbcbf489eedeab32b7/SDG/Basic/Defs.lean#L64-L72}{implies being $1$-Kock--Lawvere}.
\end{definition}

\section{One variable differential calculus}\label{1varsection}\label{1var}
\subsection{\texorpdfstring{$1$-Kock--Lawvere rings}{1-Kock--Lawvere rings}}
We assume in this subsection that $R$ is $1$-Kock--Lawvere.
\begin{definition} \label{deriv}
Let $f \colon R \to R$ be any function. We define $\partial f \colon R \to R$, the \href{https://github.com/riccardobrasca/SDG/blob/1a787e306e89ce6ac0e91ddbcbf489eedeab32b7/SDG/IsKockLawvere_one/Deriv.lean#L32-L61}{derivative} of $f$, as follows. Let $x \in R$ and consider the function $f_x \colon D \to R$ that sends $d$ to $f(x + d)$. By assumption, there is a unique element $b_x \in R$ such that for all $d \in D$ we have
\[
f_x(d) = f_x(0) + b_x d
\]
Using the axiom of unique choice, this gives a function $\partial f \colon R \to R$ that maps $x$ to $b_x$.
\end{definition}
\begin{remark}\label{higherderiv}
In the rest of the paper we will often consider higher order derivatives, denoted by $\partial^n f$. These are defined in the obvious recursive way, where $\partial^0 f = f$. Note that in \Lean the notation for the $n^{th}$ order derivative is \href{https://github.com/riccardobrasca/SDG/blob/1a787e306e89ce6ac0e91ddbcbf489eedeab32b7/SDG/IsKockLawvere_one/Deriv.lean#L19-L20}{\texttt{\ensuremath{\partial}\textasciicircum[n]f}}.
\end{remark}
The \href{https://github.com/riccardobrasca/SDG/blob/1a787e306e89ce6ac0e91ddbcbf489eedeab32b7/SDG/IsKockLawvere_one/Basic.lean#L17-L21}{following} is an important remark that will play a crucial role in the proofs of the results in the whole paper.
\begin{proposition}  \label{universaldcancels}
Let $b_1, b_2 \in R$. If for all $d\in D$ one has $b_1 d=b_2 d$, then $b_1=b_2$.
\end{proposition}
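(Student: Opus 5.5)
The plan is to apply the uniqueness clause of the $1$-Kock--Lawvere property (Definition \ref{IsKockLawvere}) to one well-chosen function on $D$. Take $g \colon D \to R$ to be $g(d) = b_1 d$. By Lemma \ref{SDG.zero_mem_D}, $0 \in D$, so $g(0) = b_1 \cdot 0 = 0$ makes sense.

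First I check that $b_1$ is a witness for $g$: for every $d \in D$,
\[
g(d) = b_1 d = 0 + b_1 d = g(0) + b_1 d .
\]
Next I check that $b_2$ is also a witness. By the hypothesis $b_1 d = b_2 d$ for all $d\in D$, so
\[
g(d) = b_1 d = b_2 d = g(0) + b_2 d .
\]
Since the definition guarantees a \emph{unique} $b \in R$ with $g(d) = g(0) + bd$ for all $d \in D$, it follows that $b_1 = b_2$.

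Nothing here is a real obstacle. The only points needing care are constructive in nature, and neither uses excluded middle or choice. We use only the uniqueness half of $\exists!$, not the extraction of a witness, so the axiom of unique choice is not needed either. Nontriviality of $R$ plays no role. In \Lean, I would destructure the $\exists!$ hypothesis for this $g$ and apply its uniqueness component to both $b_1$ and $b_2$. This gives $b_1 = b$ and $b_2 = b$ for the unique witness $b$, and hence $b_1 = b_2$. Each step is a rewriting with the hypothesis and ring lemmas such as \lean{mul_zero} and \lean{zero_add}, all of which are choice-free.
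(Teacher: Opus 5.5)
Your proof is correct and is essentially the paper's argument. The paper considers $g_1(d)=b_1 d$ and $g_2(d)=b_2 d$, observes that they coincide by hypothesis, and invokes the uniqueness clause of Definition~\ref{IsKockLawvere}; this is exactly what you do with a single function that has both $b_1$ and $b_2$ as witnesses.
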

\begin{proof}
Consider the functions $g_1,g_2 \colon D\to R$ defined by $g_1(d)=b_1 d$ and $g_2(d)=b_2 d$. By assumption we have $g_1=g_2$ and so from the uniqueness part of Definition \ref{IsKockLawvere} it follows that $b_1=b_2$.
\end{proof}
The following \href{https://github.com/riccardobrasca/SDG/blob/1a787e306e89ce6ac0e91ddbcbf489eedeab32b7/SDG/IsKockLawvere_one/Deriv.lean#L25-L26}{theorem} says that in the synthetic setting all functions are affine linear on a $D$-neighborhood.
\begin{theorem}[Theorem 2.1 of \cite{SDG}]\label{taylor_one}
Let $f\colon R \to R$ be a function. For all $x \in R$ and $d\in D$, we have
\[
f(x+d)=f(x)+\partial f(x) d
\]
Moreover, $\partial f(x)$ is \href{https://github.com/riccardobrasca/SDG/blob/1a787e306e89ce6ac0e91ddbcbf489eedeab32b7/SDG/IsKockLawvere_one/Deriv.lean#L28-L30}{characterized by this property}.
\end{theorem}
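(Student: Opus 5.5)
The proof is essentially an unfolding of Definition \ref{deriv} combined with the specification and uniqueness lemmas coming from the axiom of unique choice. Fix $x \in R$ and consider $f_x \colon D \to R$, $f_x(d) = f(x+d)$. Since $R$ is $1$-Kock--Lawvere, there is a unique $b_x \in R$ with $f_x(d) = f_x(0) + b_x d$ for all $d \in D$. By construction, $\partial f$ is \texttt{unique\_choice\_fun} applied to the family of statements
\[
\forall x,\ \exists!\, b,\ \forall d \in D,\ f(x+d) = f(x+0) + b d .
\]
Hence \texttt{unique\_choice\_fun\_spec} gives, for every $d \in D$,
\[
f(x+d) = f(x+0) + \partial f(x)\, d .
\]
Rewriting $x + 0 = x$ yields the displayed identity $f(x+d) = f(x) + \partial f(x) d$.

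For the characterization, suppose $b \in R$ satisfies $f(x+d) = f(x) + b d$ for all $d \in D$. Then $b$ satisfies the defining property of $b_x$, since $f_x(0) = f(x+0) = f(x)$. Therefore \texttt{unique\_choice\_fun\_unique} gives $\partial f(x) = b$.

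Alternatively, the characterization follows directly from Proposition \ref{universaldcancels}. Subtracting the two expansions gives $b d = \partial f(x) d$ for all $d \in D$, hence $b = \partial f(x)$.

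There is no genuine mathematical obstacle. The only point needing care is the bookkeeping that the predicate used to define $\partial f$ matches the statement exactly: the definition refers to $f_x(0) = f(x+0)$ rather than $f(x)$, so one rewrite with $x + 0 = x$ is needed in each direction. In the formalization this rewrite must be done by a choice-free lemma such as \texttt{add\_zero}, not by an automated tactic that might introduce the axiom of choice.
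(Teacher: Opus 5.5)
Your proposal is correct and takes the paper's approach. The identity is the specification of $\partial f$ from Definition~\ref{deriv}, using $f_x(0)=f(x)$, and the characterization is the uniqueness clause, which your alternative route via Proposition~\ref{universaldcancels} also yields; the paper gives only the one-line unfolding in the text and leaves the characterization to the linked formal lemma.
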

\begin{proof}
Let $x \in R$ be arbitrary. By definition (see \ref{deriv}), the derivative $\partial f$ satisfies
\[
f_x(d) = f_x(0) + \partial f (x) d
\]
Since $f_x(d) = f(x + d)$, the theorem is obvious.
\end{proof}
We have that $\partial$ is an \href{https://github.com/riccardobrasca/SDG/blob/1a787e306e89ce6ac0e91ddbcbf489eedeab32b7/SDG/IsKockLawvere_one/Deriv.lean#L32-L61}{$R$-linear operator} and moreover it satisfies the \href{https://github.com/riccardobrasca/SDG/blob/1a787e306e89ce6ac0e91ddbcbf489eedeab32b7/SDG/IsKockLawvere_one/Deriv.lean#L67-L69}{Leibniz rule}: for all functions $f$ and $g$, we have
\[
\partial(f \cdot g) = \partial f \cdot g + f \cdot \partial g.
\]
Indeed, this is a straightforward application of Theorem \ref{taylor_one}. Let $x\in R$ and $d\in D$. We then have that 
 \begin{gather*}
         (fg)(x)+\partial (fg)(x)d=(fg)(x+d)=f(x+d)g(x+d)=\\
         (f(x)+\partial f(x)d)(g(x)+\partial g(x)d)=\\
         f(x)g(x)+(\partial f(x) g(x)+f(x)\partial g(x))d,        
 \end{gather*}
so
\[
\partial (fg)(x)d=(\partial f(x) g(x)+f(x)\partial g(x))d
\]
which by Proposition \ref{universaldcancels}, reduces to the Leibniz rule.

\begin{lemma}
    \label{diffproperties}
We have the \href{https://github.com/riccardobrasca/SDG/blob/1a787e306e89ce6ac0e91ddbcbf489eedeab32b7/SDG/IsKockLawvere_one/Deriv.lean#L71-L111}{following} basic differentiation formulas.
\begin{enumerate}
    \item \label{deriv_const} Let $r\in R$ and $c\colon R\to R$ be the constant function $c(x)=r$ for all $x\in R$. Then $\partial c=0$.
    \item \label{deriv_id} Let $\id \colon R \to R$ be the identity function. Then $\partial \id=1$.
    \item (Chain rule) Let $f,g\colon R\to R$ be functions. Then, the derivative of the composition $f\circ g$ is given as $\partial (f\circ g)(x)=\partial f (g(x)) \partial g(x)$ for all $x\in R$.
    \item (Power rule) Let $n\in \N$ and $f \colon R \to R$ be the function $x \mapsto x^n$. Then $\partial f(x)=n x^{n-1}$. We will simply write in the sequel $\partial(x^n) = n x^{n-1}$.
    \item If $f\colon R\to R$ is a function such that $\frac{1}{f}$ is defined everywhere, then $\partial \left(\frac{1}{f}\right)=-\frac{\partial f}{f^2}$.
    \item If $f$ is invertible then for all $x\in R$, we have that $(\partial f) \circ f^{-1}(x) \in R^\ast$ and
\[
\partial f^{-1}=\frac{1}{(\partial f) \circ f^{-1}}
\]
\end{enumerate}
\end{lemma}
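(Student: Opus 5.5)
All six items follow one pattern: compute the left-hand side at $x+d$ for $d\in D$, rewrite it using Theorem \ref{taylor_one} into the form $(\text{value at }x)+c\,d$, and read off the derivative. Concretely, Theorem \ref{taylor_one} already gives $h(x+d)=h(x)+\partial h(x)d$ for every $d\in D$. So whenever we also have $h(x+d)=h(x)+c\,d$ for all $d\in D$, we get $\partial h(x)d=c\,d$ for all $d\in D$, and Proposition \ref{universaldcancels} gives $\partial h(x)=c$. This is exactly the scheme used above for the Leibniz rule.

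For (1), $c(x+d)=r=r+0\cdot d$, so $\partial c=0$. For (2), $\id(x+d)=x+1\cdot d$, so $\partial\id=1$.

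For the chain rule (3), first write $g(x+d)=g(x)+\partial g(x)d$. By Lemma \ref{closure_prod}, $e\colonequals\partial g(x)d\in D$. Theorem \ref{taylor_one} applied to $f$ at the point $g(x)$ with increment $e$ then gives
\[
f(g(x+d))=f(g(x))+\partial f(g(x))\,\partial g(x)\,d,
\]
and the scheme concludes.

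For the power rule (4), I would argue by induction on $n$ rather than through Lemma \ref{synthle}, since this avoids issues with $n-1$ at $n=0$. The case $n=0$ is (1). For the inductive step, write $x^{n+1}=x\cdot x^n$ and use the Leibniz rule together with (2): $\partial(x^{n+1})=x^n+x\cdot n x^{n-1}=(n+1)x^n$. The case $n=1$ needs a separate check, or better, the statement should be kept in the form $\partial(x^{n+1})=(n+1)x^n$ to sidestep the convention on $x^{-1}$.

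For (5), let $u=1/f$, so that $u\cdot f=1$. Differentiating with the Leibniz rule and (1) gives $\partial u\cdot f+u\cdot\partial f=0$. Multiplying by $u$ and using $uf=1$ yields $\partial u=-u^2\partial f=-\partial f/f^2$.

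For (6), let $g=f^{-1}$, so $f\circ g=\id$. The chain rule and (2) give $\partial f(g(x))\,\partial g(x)=1$ for every $x$. Hence $(\partial f)\circ f^{-1}(x)$ is a unit with inverse $\partial g(x)$, which gives both claims at once.

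The main obstacle I expect is not mathematical but lies in the formalization: avoiding choice in the numeral casts appearing in the power rule (the $(n+1)$ factor), and making sure the chain rule correctly invokes $D$-closure under multiplication by arbitrary ring elements (Lemma \ref{closure_prod}). Everything else is a direct instance of the cancellation principle of Proposition \ref{universaldcancels}.
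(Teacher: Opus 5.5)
Your proposal is correct and matches the paper's proof item by item. The paper also uses Theorem \ref{taylor_one} together with the cancellation principle (Proposition \ref{universaldcancels}) for (1)--(3), induction via the Leibniz rule and (2) for the power rule, the Leibniz rule applied to $f\cdot\frac{1}{f}=1$ for (5), and the chain rule applied to $f\circ f^{-1}=\id$ for (6). Your concern about the step from $n=0$ to $n=1$ in the power rule is harmless, since the problematic term has coefficient $0$; the paper's computation $(kx^{k-1})x+x^k=(k+1)x^k$ has the same feature and needs no separate case.
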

\begin{proof}
Let $x\in R$ and $d\in D$.
\begin{enumerate}
    \item By Definition \ref{deriv}, we have that $r=r+\partial c (x) d$, so $0=\partial c (x) d$, and so $\partial c (x)=0$ by Proposition \ref{universaldcancels}.
    \item By Definition \ref{deriv}, $x+d=x+\partial \id (x) d$, so $d=\partial \id (x) d$. Using Proposition \ref{universaldcancels} we obtain $\partial \id (x)=1$.
    \item Applying Definition \ref{deriv} to $g$ and using the fact that $g(x)\in R$ and $\partial g(x) d \in D$ (see Lemma \ref{closure_prod}) to expand $f(g(x)+\partial g(x) d)$ as per the same definition, we have that
    \begin{gather*}
         (f\circ g)(x+d)=f(g(x+d)) =\\
         f(g(x)+\partial g(x) d) = \\
         (f\circ g)(x)+\partial f (g(x)) \partial g(x) d
    \end{gather*}
    Again, Definition \ref{deriv} implies that $\partial f (g(x)) \partial g(x)$ is the derivative of $f\circ g$ at $x$.
    \item We can prove this by induction. If $n=0$ we can apply \eqref{deriv_const}. Assume that the claim is true for $k$. Then, from the Leibniz rule, the inductive hypothesis and \eqref{deriv_id}, it follows that for any $x\in R$,
    \begin{gather*}
        \partial (x^{k+1})=\partial (x^k x) =\\
        (\partial x^k)x+x^k (\partial x) =\\
        (kx^{k-1})x+x^k=(k+1)x^k
    \end{gather*}
    and we are done.
    \item Since
    \[
    \partial \left(\frac{f}{f}\right)=\partial 1=0
    \]
    by \eqref{deriv_const}, and since by the Leibniz rule
    \[
    \partial \left(\frac{f}{f}\right)=\frac{\partial f}{f}+f\partial \left(\frac{1}{f}\right)
    \]
    it follows that
    \[
    \partial \left(\frac{1}{f}\right)=-\frac{\partial f}{f^2}
    \]
    \item Observe that by (\ref{deriv_id}) and the chain rule,

    \begin{gather*}
         1=\partial (f\circ f^{-1}) =\\
         ((\partial f) \circ f^{-1})\partial f^{-1}
    \end{gather*}
    so
    \[
    \partial f^{-1}=\frac{1}{(\partial f) \circ f^{-1}}
    \]
\end{enumerate}
\end{proof}
From now on we suppose that $2$ is invertible in $R$. The \href{https://github.com/riccardobrasca/SDG/blob/1a787e306e89ce6ac0e91ddbcbf489eedeab32b7/SDG/IsKockLawvere_one/Deriv.lean#L115-L124}{following} says that on a $(D+D)$-neighborhood all functions are given by a quadratic polynomial.
\begin{lemma}[Proposition 2.3 of \cite{SDG}]\label{taylor_two_aux}
Let $d_0, d_1 \in D$ and $\delta=d_0+d_1$. For any function $f\colon R \to R$ and for all $x \in R$, we have
\[
f(x+\delta)=f(x)+\partial f(x) \delta+\partial^2 f(x) \frac{\delta^2}{2}
\]
\end{lemma}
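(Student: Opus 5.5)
Apply Theorem \ref{taylor_one} twice, first in the variable $d_1$ and then in $d_0$, and then use Lemma \ref{square} to rewrite the resulting cross term as $\delta^2/2$. Fix $x \in R$ and $d_0, d_1 \in D$.

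\emph{First expansion.} Applying Theorem \ref{taylor_one} at the point $x+d_0$ with increment $d_1$ gives
\[
f(x+\delta) = f(x+d_0) + \partial f(x+d_0)\, d_1 .
\]

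\emph{Second expansion.} Apply Theorem \ref{taylor_one} again at the point $x$ with increment $d_0$, once to $f$ and once to $\partial f$. This gives $f(x+d_0) = f(x) + \partial f(x) d_0$ and $\partial f(x+d_0) = \partial f(x) + \partial^2 f(x) d_0$. Substituting both into the first expansion,
\[
f(x+\delta) = f(x) + \partial f(x)(d_0+d_1) + \partial^2 f(x)\, d_0 d_1 .
\]

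\emph{Conversion of the cross term.} By Lemma \ref{square}, $\delta^2 = (d_0+d_1)^2 = 2 d_0 d_1$. Since $2$ is invertible in $R$, this gives $d_0 d_1 = \delta^2/2$. Substituting yields exactly the formula in the statement.

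The argument is a direct computation, so I do not expect a serious obstacle. The only point needing care is the asymmetric choice of which infinitesimal to expand first, together with the observation that Theorem \ref{taylor_one} applies to the function $\partial f$ as well (so that $\partial^2 f = \partial(\partial f)$ in the sense of Remark \ref{higherderiv}). The hypothesis that $2$ is invertible is used only in the last step. In the formalization, the main friction would likely be handling the literal $2$ and its inverse without triggering choice-dependent lemmas.
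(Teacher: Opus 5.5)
Your proposal is correct and follows the paper's proof essentially verbatim. Like the paper, you expand at $x+d_0$ with increment $d_1$ using Theorem \ref{taylor_one}, expand $f(x+d_0)$ and $\partial f(x+d_0)$ again at $x$, and then rewrite the cross term $\partial^2 f(x)\,d_0 d_1$ as $\partial^2 f(x)\,\delta^2/2$ using Lemma \ref{square} and the invertibility of $2$.
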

\begin{proof}
By Lemma \ref{square}, we have that $\delta^2=(d_0+d_1)^2=2d_0 d_1$. From this equality and Theorem \ref{taylor_one} it follows that
\begin{gather*}
    f(x+\delta)=f(x+d_0)+\partial f(x+d_0) d_1 =\\
    f(x)+\partial f (x) d_0+(\partial f(x)+\partial^2f(x) d_0) d_1=\\
    f(x)+\partial f(x) (d_0+d_1)+ \partial^2 f(x) d_0 d_1=\\
    f(x)+\partial f(x) \delta+\partial^2 f(x) \frac{\delta^2}{2}
\end{gather*}
\end{proof}
\subsection{Kock--Lawvere rings}
We additionally suppose that $R$ is Kock--Lawvere and that $2$ is invertible in $R$. The \href{https://github.com/riccardobrasca/SDG/blob/1a787e306e89ce6ac0e91ddbcbf489eedeab32b7/SDG/IsKockLawvere/Taylor.lean#L12-L31}{following} generalizes Lemma \ref{taylor_two_aux} above to $D_2$-neighborhoods.
\begin{proposition}[Theorem 3.1 of \cite{SDG}]\label{taylor_two}
Let $f\colon R \to R$ be any function. For all $x \in R$ and $\delta\in D_2$, we have
\[
f(x+\delta)=f(x)+\partial f(x) \delta+\partial^2 f(x) \frac{\delta^2}{2}
\]
\end{proposition}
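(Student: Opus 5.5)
We will prove Proposition \ref{taylor_two} by comparing two polynomial expressions in $\delta$ and using the Kock--Lawvere axiom on $D_2$ to separate coefficients, with Lemma \ref{taylor_two_aux} supplying the identification on $D+D$.

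Fix $f$ and $x$. Apply the Kock--Lawvere property with $k=2$ to $g\colon D_2\to R$, $g(\delta)=f(x+\delta)$. This gives unique $b_0,b_1\in R$ such that
\[
f(x+\delta)=f(x)+b_0\delta+b_1\delta^2
\]
for all $\delta\in D_2$. It then suffices to show $b_0=\partial f(x)$ and $b_1=\partial^2 f(x)/2$.

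First we identify $b_0$. Since $D\subseteq D_2$ by Lemma \ref{containDk}, for $d\in D$ we have $d^2=0$, so the expansion reads $f(x+d)=f(x)+b_0 d$. Theorem \ref{taylor_one} gives $f(x+d)=f(x)+\partial f(x)d$. Proposition \ref{universaldcancels} then yields $b_0=\partial f(x)$.

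Next we identify $b_1$, which is the main step. Take $d_0,d_1\in D$ and put $\delta=d_0+d_1$. Lemma \ref{synthzero} with $k=2$ gives $\delta^3=0$, so $\delta\in D_2$. By Lemma \ref{square}, $\delta^2=2d_0d_1$. Comparing the KL expansion with Lemma \ref{taylor_two_aux}, and using $b_0=\partial f(x)$, we get
\[
2b_1 d_0d_1=\partial^2 f(x)\,d_0d_1
\]
for all $d_0,d_1\in D$. Fixing $d_0$ and applying Proposition \ref{universaldcancels} in the variable $d_1$ gives $2b_1d_0=\partial^2 f(x)d_0$ for all $d_0\in D$. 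Applying Proposition \ref{universaldcancels} again gives $2b_1=\partial^2 f(x)$. Since $2$ is invertible, $b_1=\partial^2 f(x)/2$.

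Substituting these values of $b_0$ and $b_1$ into the expansion gives the stated formula for every $\delta\in D_2$. The only care needed is the double cancellation. The nilpotents $\delta\in D_2$ need not be sums of elements of $D$, so the argument must rest on the uniqueness in the Kock--Lawvere axiom rather than on decomposing $\delta$. We avoid this issue by working throughout with the coefficients $b_0,b_1$ and only ever testing them on elements of $D$ and $D+D$.
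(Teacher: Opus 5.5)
Your proposal is correct and follows essentially the same route as the paper. You obtain $b_0,b_1$ from the Kock--Lawvere axiom on $D_2$, identify $b_0=\partial f(x)$ by restricting to $D$, and then pin down $b_1$ by testing on $d_0+d_1$ using Lemma \ref{taylor_two_aux} and Lemma \ref{square}, cancelling twice via Proposition \ref{universaldcancels}; you also spell out details the paper leaves implicit, namely $d_0+d_1\in D_2$ via Lemma \ref{synthzero} and the final division by the invertible element $2$.
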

\begin{proof}
Let us define a function $g_x \colon D_2 \to R$ by $g_x(d)=f(x+d)$. By Definition \ref{IsKockLawvere}, there are (unique) $b_0$ and $b_1$ in $R$ such that, for all $\delta \in D_2$, we have $g_x(\delta)=g_x(0)+b_0 \delta+b_1 \delta^2$, i.e.
\[
    f(x+\delta)=f(x)+b_0 \delta+b_1 \delta^2
\]
In particular, if $d \in D \subseteq D_2$, we have $f(x+d)=f(x)+b_0 d$ so by the uniqueness part of Theorem \ref{taylor_one}, we have $b_0 = \partial f(x)$. 

It remains to prove that $b_1 = \frac{\partial^2 f(x)}{2}$. To that end, let $d_0$ and $d_1$ be in $D$, so $d_0 + d_1 \in D_2$. In particular,
\[
f(x + d_0 + d_1) = f(x) + \partial f(x)(d_0+d_1)+b_1(d_0+d_1)^2
\]
but also Lemmas \ref{taylor_two_aux} and \ref{square} imply that
\[
f(x + d_0 + d_1) = f(x) + \partial f(x)(d_0+d_1)+\partial^2 f(x) d_0d_1
\]
Applying Proposition \ref{universaldcancels} twice, we deduce the desired form of $b_1$ and we are done.
\end{proof}
We now generalize the two previous results to any $D_k$. We assume that $R$ is a $\Q$-algebra. The \href{https://github.com/riccardobrasca/SDG/blob/1a787e306e89ce6ac0e91ddbcbf489eedeab32b7/SDG/IsKockLawvere/Taylor.lean#L55-L95}{following} also generalizes Lemma \ref{taylor_two_aux} but to $\underbrace{(D+\dots +D)}_\text{$k$ times}$-neighborhoods.
\begin{proposition}\label{taylor_k_aux}
Let $d_0,\dots,d_{k-1} \in D$ and $\delta=d_0+\dots+d_{k-1}$. For any function $f\colon R \to R$ and for all $x \in R$ we have
\[
f(x+\delta)= \sum_{n < k + 1} \partial^n f(x) \frac{\delta^n}{n!} =f(x)+\partial f(x) \delta+\dots+\partial^k f(x) \frac{\delta^k}{k!}
\]
\end{proposition}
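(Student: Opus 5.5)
We argue by induction on $k$, with the statement quantified over all functions $f$, all $x\in R$ and all $d_0,\dots,d_{k-1}\in D$. This generality is needed because the inductive step applies the hypothesis to $\partial f$ instead of $f$. For $k=0$ the sum $\delta$ is empty, so $\delta=0$, and the right-hand side reduces to the single term $\partial^0 f(x)\frac{\delta^0}{0!}=f(x)$ (using $0^0=1$), so the claim holds. The case $k=1$ is Theorem \ref{taylor_one}, and the case $k=2$ is Lemma \ref{taylor_two_aux}. These are not needed for the induction but serve as a sanity check.

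For the inductive step, suppose the statement holds for $k$. Take $d_0,\dots,d_k\in D$ and write $\delta=d_0+\delta'$ with $\delta'=\sum_{i<k}d_{i+1}$. Apply the inductive hypothesis to $f$ at the point $x+d_0$ with increment $\delta'$:
\[
f(x+\delta)=\sum_{n<k+1}\partial^n f(x+d_0)\frac{\delta'^n}{n!}.
\]
By Theorem \ref{taylor_one} applied to $\partial^n f$, each coefficient expands as $\partial^n f(x+d_0)=\partial^n f(x)+\partial^{n+1}f(x)\,d_0$. Hence
\[
f(x+\delta)=\sum_{n<k+1}\partial^n f(x)\frac{\delta'^n}{n!}+\sum_{n<k+1}\partial^{n+1}f(x)\frac{d_0\delta'^n}{n!}.
\]

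Next we compare this with the target $\sum_{n<k+2}\partial^n f(x)\frac{\delta^n}{n!}$. By Lemma \ref{synthle}, for $n\ge 1$ we have $\delta^n=(d_0+\delta')^n=n\,d_0\delta'^{n-1}+\delta'^n$, and this identity also holds for $n=0$. Therefore
\[
\frac{\delta^n}{n!}=\frac{d_0\delta'^{n-1}}{(n-1)!}+\frac{\delta'^n}{n!}\quad(n\ge1).
\]
Substituting this into the target sum and reindexing the first piece by $n\mapsto n+1$ gives exactly the two sums above, plus one leftover term $\partial^{k+1}f(x)\frac{\delta'^{k+1}}{(k+1)!}$. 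This leftover vanishes by Lemma \ref{synthzero}, since $\delta'$ is a sum of $k$ elements of $D$. This closes the induction. Since $R$ is a $\Q$-algebra, the factorial denominators make sense.

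We expect the main obstacle to be the bookkeeping of the finite sums. This means splitting off the $n=0$ term, reindexing the sum $\sum_{n<k+2}$ as $\partial^0 f(x)$ plus a sum over $n<k+1$, and cancelling $n/n!=1/(n-1)!$ in $\Q$. In the formalization this is aggravated by the need to avoid choice-dependent lemmas such as \lean{Finset.sum_range_succ'}. Mathematically the step is routine, and the first thing to try is to state Lemma \ref{synthle} in the form $\delta^{n+1}/(n+1)!=d_0\delta'^n/n!+\delta'^{n+1}/(n+1)!$, so that the two sums match term by term.
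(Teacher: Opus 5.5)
Your proof is correct and is essentially the paper's argument: the same splitting $\delta=d_0+\delta'$, the same intermediate expression $\sum_{n<k+1}\partial^n f(x)\frac{{\delta'}^{n}}{n!}+\sum_{n<k+1}\partial^{n+1}f(x)\frac{{\delta'}^{n}}{n!}d_0$, and the same use of Lemmas \ref{synthle} and \ref{synthzero} to match it with the target. The only difference is the order of the two expansions. The paper writes $f(x+\delta)=f((x+\delta')+d_0)$, applies Theorem \ref{taylor_one} first, and then applies the induction hypothesis to $f$ and to $\partial f$ at $x$; you instead apply the induction hypothesis to $f$ at $x+d_0$ and then Theorem \ref{taylor_one} to each $\partial^n f$. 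So, contrary to your remark, your induction only uses generality in the base point, not in the function.
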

\begin{proof}
The proof is by induction. If $k=0$ the result is obvious (note that we also proved the cases $k=1,2$ in Theorem \ref{taylor_one} and Lemma \ref{taylor_two_aux}).

Assume that the formula holds for any function and any $k$-tuple and let $d_0,\dots,d_k \in D$. We set
\[
\delta \colonequals \sum_{n < k} d_{n+1} \mbox{ and } \Delta \colonequals \sum_{n < k + 1} d_n
\]
so $\Delta = d_0 + \delta$. After an elementary manipulation, we need to prove
\begin{gather}
f(x+\Delta)= \sum_{n < k + 2} \partial^n f(x) \frac{\Delta^n}{n!} = \nonumber\\
f(x)+ \sum_{n < k} \partial^{n+1} f(x) \frac{\Delta^{n+1}}{(n+1)!} +\partial^{k+1} f(x) \frac{\Delta^{k+1}}{(k+1)!} \label{reformulation}
\end{gather}
By Theorem \ref{taylor_one} and the induction hypothesis (applied to $f$ and $\partial f$) we have
\begin{gather*}
f(x + \Delta) = f((x + \delta) + d_0) =\\
f(x + \delta) + \partial f(x + \delta)d_0 = \\
\sum_{n < k + 1} \partial^n f(x) \frac{\delta^n}{n!} + \sum_{n < k + 1} \partial^{n+1} f(x) \frac{\delta^n}{n!}d_0 = \\
f(x)+\sum_{n < k} \partial^{n+1} f(x) \frac{\delta^{n+1}}{(n+1)!} + \sum_{n < k} \partial^{n+1} f(x) \frac{\delta^n}{n!}d_0 + \partial^{k+1} f(x) \frac{\delta^k}{k!}d_0=\\
f(x)+ \sum_{n < k} \partial^{n+1}f(x)\frac{\delta^{n+1}+(n+1)\delta^nd_0}{(n+1)!}+\partial^{k+1} f(x) \frac{\delta^k}{k!}d_0
\end{gather*}
We now compare the last line with \eqref{reformulation}.
\begin{itemize}
\item We show that the two sums are equal comparing them term by term. Using Lemma \ref{synthle}, we have
\[
\Delta^{n+1}=(d_0 + \delta)^{n+1}=(n+1)d_0\delta^n+\delta^{n+1}
\]
and we are done.
\item Similarly, using Lemmas \ref{synthle} and \ref{synthzero}, we have
\[
\Delta^{k+1}=(d_0 + \delta)^{k+1}=(k+1)d_0\delta^k+\delta^{k+1}=(k+1)d_0\delta^k
\]
so
\[
\frac{\Delta^{k+1}}{(k+1)!}=\frac{\delta^k}{k!}d_0
\]
as required.
\end{itemize}
\end{proof}
We now move on to the \href{https://github.com/riccardobrasca/SDG/blob/1a787e306e89ce6ac0e91ddbcbf489eedeab32b7/SDG/IsKockLawvere/Taylor.lean#L145-L159}{main result} of this section, which generalizes the previous one to any $D_k$-neighborhood.
\begin{theorem}[Theorem 3.1 of \cite{SDG}]\label{taylor_k}
Let $f\colon R \to R$ be any function. For all $x \in R$ and $\delta\in D_k$ we have
\[
f(x+\delta)= \sum_{n < k + 1}\partial^nf(x) \frac{\delta^n}{n!} = f(x)+\partial f(x) \delta+\dots+\partial^k f(x)  \frac{\delta^k}{k!}
\]
\end{theorem}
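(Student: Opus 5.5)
We follow the pattern of the proof of Proposition \ref{taylor_two}: first use the Kock--Lawvere axiom to write $f(x+\delta)$ as a polynomial in $\delta$ with unknown coefficients, then pin those coefficients down by testing on sums of elements of $D$. Fix $x \in R$ and define $g_x \colon D_k \to R$ by $g_x(\delta)=f(x+\delta)$. By Definition \ref{IsKockLawvere} there are unique $b_0,\dots,b_{k-1}\in R$ such that, for all $\delta\in D_k$,
\[
f(x+\delta)=f(x)+\sum_{i<k} b_i\delta^{i+1}.
\]
By this uniqueness, it suffices to show $b_i = \partial^{i+1}f(x)/(i+1)!$ for every $i<k$. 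We will instead prove directly that the two polynomials $\sum_{i<k} b_i\delta^{i+1}$ and $\sum_{i<k}\frac{\partial^{i+1}f(x)}{(i+1)!}\delta^{i+1}$ agree as coefficient vectors.

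Take $d_0,\dots,d_{k-1}\in D$ and set $\delta=\sum_{i<k}d_i$. By Lemma \ref{synthzero}, $\delta\in D_k$, so the polynomial expression above applies to $\delta$. Proposition \ref{taylor_k_aux} also gives $f(x+\delta)=\sum_{n<k+1}\partial^nf(x)\delta^n/n!$. Subtracting, we get
\[
\sum_{i<k} c_i \Big(\sum_{j<k} d_j\Big)^{i+1}=0 \quad\text{for all } d_0,\dots,d_{k-1}\in D, \qquad c_i \colonequals b_i-\frac{\partial^{i+1}f(x)}{(i+1)!}.
\]
It remains to deduce that every $c_i=0$, and this is the main obstacle.

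For this we argue by downward induction on $i$. To isolate $c_m$, specialize $d_j=0$ for $j\ge m+1$, which is allowed since $0\in D$ by Lemma \ref{SDG.zero_mem_D}. Then $\epsilon \colonequals d_0+\dots+d_m$ satisfies $\epsilon^{i+1}=0$ for $i>m$ by Lemma \ref{synthzero}. We may assume inductively that $c_i=0$ for all $i>m$, although those terms vanish anyway. The identity becomes
\[
\sum_{i\le m} c_i\epsilon^{i+1}=0.
\]
This alone does not isolate $c_m$, since the lower terms survive. To kill them, we replace $d_j$ by $t_j d_j$ and extract the multilinear part. Expanding each $\epsilon^{i+1}$ with Lemma \ref{synthle} shows that it is a sum of squarefree monomials of degree $i+1$ in the $d_j$. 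Only the term $i=m$ contains the full monomial $d_0\cdots d_m$, with coefficient $(m+1)!$ by Lemma \ref{sum_pow_eq_mul_prod}.

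Rather than extracting coefficients abstractly, we will use inclusion--exclusion over the $d_j$ set to zero: sum the identity over all subsets $S\subseteq\{0,\dots,m\}$ with sign $(-1)^{m+1-|S|}$, each time keeping $d_j$ only for $j\in S$. For a single squarefree monomial in variables $T$, this alternating sum vanishes unless $T=\{0,\dots,m\}$, so all lower terms drop out. The resulting identity is
\[
c_m(m+1)!\,d_0\cdots d_m=0.
\]
Since $R$ is a $\Q$-algebra, $(m+1)!$ is invertible, so $c_m d_0\cdots d_m=0$ for all $d_j\in D$. Applying Proposition \ref{universaldcancels} $m+1$ times, peeling off one $d_j$ at a time, gives $c_m=0$.

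An alternative we would try first, since it may be simpler to formalize, is an induction on $k$. Since $D_{k-1}\subseteq D_k$ by Lemma \ref{containDk}, the uniqueness in Definition \ref{IsKockLawvere} for $D_{k-1}$ together with the induction hypothesis does not directly identify $b_0,\dots,b_{k-2}$, because $\delta^k$ need not vanish on $D_{k-1}$. The naive induction therefore fails, and we would fall back on the inclusion--exclusion argument above.
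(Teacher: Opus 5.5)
Your main argument is correct, but it isolates the coefficients differently from the paper. After obtaining $b_0,\dots,b_k$ from the Kock--Lawvere axiom on $D_{k+1}$, the paper proves $b_n=\partial^{n+1}f(x)/(n+1)!$ by \emph{upward} strong induction on $n$. For $d_0,\dots,d_{n+1}\in D$ and $\delta=\sum_{i<n+2}d_i\in D_{n+2}$, the coefficients above $b_{n+1}$ vanish by nilpotency of $\delta$, and those below cancel by the induction hypothesis. This leaves $b_{n+1}\delta^{n+2}=\partial^{n+2}f(x)\delta^{n+2}/(n+2)!$, which Lemma \ref{sum_pow_eq_mul_prod} and Proposition \ref{universaldcancels} finish; $b_0$ is handled by Theorem \ref{taylor_one}.

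Your downward induction is vacuous, as you observe. At exactly the point where you write that ``the lower terms survive'', upward induction gives $\sum_{i\le m}c_i\epsilon^{i+1}=c_m\epsilon^{m+1}=(m+1)!\,c_m\,d_0\cdots d_m$ in one line. Your inclusion--exclusion over $S\subseteq\{0,\dots,m\}$ is a valid substitute. It is a M\"obius inversion that extracts each $c_m$ independently, with no induction and no separate treatment of $b_0$. But it needs an extra lemma that you only sketch: $\bigl(\sum_{j\in S}d_j\bigr)^p=p!\sum_{T\subseteq S,\,|T|=p}\prod_{j\in T}d_j$ for $d_j\in D$, with coefficients independent of $S$. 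This generalizes Lemma \ref{sum_pow_eq_mul_prod} and follows from Lemma \ref{synthle} by induction. You also need the identity $\sum_{T\subseteq S\subseteq\{0,\dots,m\}}(-1)^{m+1-|S|}=0$ for $T\neq\{0,\dots,m\}$. That is considerably more finite-sum bookkeeping, which in the paper's choice-free Lean setting is precisely the costly part.

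Your last paragraph contains a genuine error, though the main argument does not use it. The power $\delta^k$ \emph{does} vanish on $D_{k-1}$, since $D_{k-1}=\{r\in R : r^k=0\}$ by Definition \ref{SDG.D}. So the naive induction on $k$ works. Restricting the $D_k$-expansion to $D_{k-1}$ kills $b_{k-1}\delta^k$. Then uniqueness in Definition \ref{IsKockLawvere} for $D_{k-1}$, together with the induction hypothesis, gives $b_i=\partial^{i+1}f(x)/(i+1)!$ for $i<k-1$. The remaining coefficient $b_{k-1}$ is identified by taking $\delta=d_0+\dots+d_{k-1}$ with $d_j\in D$ and applying Proposition \ref{taylor_k_aux}, Lemma \ref{sum_pow_eq_mul_prod} and Proposition \ref{universaldcancels}. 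This is a clean proof in which only the top coefficient needs the infinitesimal computation. Notably, the paper itself formally sets up an induction on $k$ but never uses that hypothesis, doing the strong induction on the coefficient index instead.
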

\begin{proof}
The proof is by induction. If $k=0$ the result is obvious (note that we also proved the cases $k=1,2$ in Theorem \ref{taylor_one} and Proposition \ref{taylor_two}).

Assume that the theorem holds for any function and any $\delta \in D_k$. We consider the function $g_x \colon D_{k+1} \to R$ defined by $g_x(d)=f(x+d)$: by Definition \ref{IsKockLawvere}, there exist (unique) $b_0,\dots,b_k \in R$ such that for all $\Delta\in D_{k+1}$ we have
\begin{gather}\label{equk}
        f(x+\Delta)=f(x)+ \sum_{i < k + 1} b_i \Delta^{i+1}
\end{gather}
Take such a $\Delta$. It is enough to prove that, for all $n < k+1$, we have
\[
b_n=\frac{\partial^{n+1}f(x)}{(n+1)!}
\]
We prove this claim itself by strong induction on $n$.
\begin{itemize}
    \item To prove that $b_0=\partial f(x)$ we can, by the uniqueness part of Theorem \ref{taylor_one}, show that for all $d \in D$ we have $f(x + d) = f(x) + b_0d$. This follows immediately from \eqref{equk} since $D \subseteq D_{k+1}$ and $d^2 = 0$ if $d \in D$.
    \item Let us prove the claim for $n + 1$ (assuming that $n+1 < k + 1$). Applying Proposition \ref{universaldcancels} $n+2$ times, it is enough to prove that, for all $d_0, \ldots, d_{n+1} \in D$, we have
    \[
    (n+2)!b_{n+1}\prod_{i < n + 2}d_i=\partial^{n+2}f(x)\prod_{i < n + 2}d_i
    \]
    We set $\delta \colonequals \sum_{i < n + 2}d_i$. By Lemma \ref{synthzero}, we have that $\delta \in D_{n+2} \subseteq D_{k+1}$ and in particular
\[
f(x+\delta)=f(x)+ \sum_{i < k + 1} b_i \delta^{i+1}
\]
Using Proposition \ref{taylor_k_aux} to replace $f(x+\delta)$ we get, after an obvious re-indexing,
\[
\sum_{i < n + 2}\partial^{i+1}f(x) \frac{\delta^{i+1}}{(i+1)!} = \sum_{i < k + 1} b_i \delta^{i+1}
\]
We now use that $b_0 = \partial f(x)$ and re-index again leads to
\[
\sum_{i < n + 1}\partial^{i+2}f(x) \frac{\delta^{i+2}}{(i+2)!} = \sum_{i < k} b_{i+1} \delta^{i+2}
\]
Since $\delta \in D_{n+2}$, if $i \geq n+1$ we have $\delta^{i+2} = 0$, so
\[
\sum_{i < n + 1}\partial^{i+2}f(x) \frac{\delta^{i+2}}{(i+2)!} = \sum_{i < n + 1} b_{i+1} \delta^{i+2}
\]
hence
\[
\sum_{i < n}\partial^{i+2}f(x) \frac{\delta^{i+2}}{(i+2)!} + \partial^{n+2}f(x)\frac{\delta^{n+2}}{(n+2)!} = \sum_{i < n} b_{i+1} \delta^{i+2}+b_{n+1} \delta^{n+2}
\]
If $i < n$, we can apply strong induction to $i+1$, so
\[
b_{i+1}=\frac{\partial^{i+2}f(x)}{(i+2)!}
\]
and in particular the two sums in the last formula are equal, so
\[
\partial^{n+2}f(x)\frac{\delta^{n+2}}{(n+2)!} = b_{n+1} \delta^{n+2}
\]
Using Lemma \ref{sum_pow_eq_mul_prod} (remember that $\delta = \sum_{i < n + 2}d_i$), this equality becomes
\[
\partial^{n+2}f(x)\prod_{i < n + 2}d_i = b_{n+1} (n+2)! \prod_{i < n + 2}d_i
\]
and that is exactly what we needed to prove.
\end{itemize}
\end{proof}

\section{Several variables differential calculus}\label{multi-var}

\subsection{Partial derivatives}
We assume in this subsection that $R$ is $1$-Kock--Lawvere.
\begin{definition}
\label{derivpartial}
Let $f \colon R^n \to R$ be any function. For each $i<n$, we define $\partial_i f \colon R^n \to R$, the \href{https://github.com/riccardobrasca/SDG/blob/1a787e306e89ce6ac0e91ddbcbf489eedeab32b7/SDG/IsKockLawvere_one/PartialDeriv.lean#L49-L68}{$i^{th}$ partial derivative} of $f$, as follows. Let $x \in R^n$ and consider the function $f^i_x \colon D \to R$ that sends $d$ to $f(x_0,\dots,x_i + d,\dots,x_{n-1})$. By assumption, there is a unique element $b^i_x \in R$ such that for all $d \in D$ we have
\[
f^i_x(d) = f^i_x(0) + b^i_x d
\]
Using the axiom of unique choice, this gives a function $\partial_i f \colon R^n \to R$ that maps $x$ to $b^i_x$.
\end{definition}
\begin{remark} \label{higherderivpartial}
Similarly to Remark \ref{higherderiv}, we denote the higher partial derivatives of $f$ by $\partial_i^n f$ (in \Lean those are denoted by \href{https://github.com/riccardobrasca/SDG/blob/1a787e306e89ce6ac0e91ddbcbf489eedeab32b7/SDG/IsKockLawvere_one/PartialDeriv.lean#L35-L36}{\texttt{\ensuremath{\partial}\_[i]\textasciicircum[n]f}}).
\end{remark}
The \href{https://github.com/riccardobrasca/SDG/blob/1a787e306e89ce6ac0e91ddbcbf489eedeab32b7/SDG/IsKockLawvere_one/PartialDeriv.lean#L38-L47}{following} is an immediate consequence of the definition of partial derivative and it generalizes Theorem \ref{taylor_one}.
\begin{theorem}\label{taylor_one_multi}
Let $f\colon R^n \to R$ be a function. For all $x \in R^n$ and $d\in D$, we have
\[
f(x_0,\dots,x_i + d,\dots,x_{n-1})=f(x)+\partial_i f (x) d
\]
Moreover, $\partial_i f$ is \href{https://github.com/riccardobrasca/SDG/blob/1a787e306e89ce6ac0e91ddbcbf489eedeab32b7/SDG/IsKockLawvere_one/PartialDeriv.lean#L43-L47}{characterized} by this property.
\end{theorem}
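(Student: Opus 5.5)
The plan is to mirror the proof of Theorem \ref{taylor_one}, since this statement simply unpacks Definition \ref{derivpartial}. Fix $f\colon R^n \to R$, an index $i<n$ and a point $x\in R^n$, and write $x+de_i$ for the tuple $(x_0,\dots,x_i+d,\dots,x_{n-1})$. The key observation is that $f^i_x(0)=f(x+0\cdot e_i)=f(x)$ and $f^i_x(d)=f(x+de_i)$ by construction. Everything then reduces to the specification and uniqueness lemmas attached to unique choice.

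For the existence part, I would apply \lean{unique_choice_fun_spec} to the family of unique-existence statements $\forall x,\ \exists!\, b,\ \forall d\in D,\ f^i_x(d)=f^i_x(0)+bd$. These statements are supplied by the $1$-Kock--Lawvere property applied to $f^i_x$. The spec gives $f^i_x(d)=f^i_x(0)+\partial_i f(x)\,d$ for all $d\in D$. Rewriting $f^i_x(0)=f(x)$ then yields the displayed identity. In the formalization the only care needed is that replacing the $i$-th coordinate by $x_i+0$ gives back $x$ definitionally or via a simple \lean{Function.update} lemma, proved without choice.

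For the characterization, suppose $b\in R$ satisfies $f(x_0,\dots,x_i+d,\dots,x_{n-1})=f(x)+bd$ for all $d\in D$. Then $b$ satisfies the defining property of $b^i_x$, so \lean{unique_choice_fun_unique} gives $\partial_i f(x)=b$. Alternatively, one can subtract the two identities and apply Proposition \ref{universaldcancels}.

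There is no real mathematical obstacle. The only delicate point is bookkeeping: making the coordinate update $x_i\mapsto x_i+0$ coincide with $x$, and doing so without invoking choice (for example, avoiding \lean{simp} lemmas about \lean{Function.update} that depend on it). If needed, I would prove that small lemma by \lean{funext} and a case split on decidable equality of indices in \lean{Fin n}.
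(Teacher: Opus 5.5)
Your proposal is correct and follows the paper's proof: both unpack Definition~\ref{derivpartial} through the specification property of unique choice, using $f^i_x(0)=f(x)$ and $f^i_x(d)=f(x_0,\dots,x_i+d,\dots,x_{n-1})$. Your explicit argument for the characterization, via \texttt{unique\_choice\_fun\_unique} or Proposition~\ref{universaldcancels}, just spells out what the paper leaves as ``obvious.''
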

\begin{proof}
Let $x \in R^n$ be arbitrary. By definition (see \ref{derivpartial}), the partial derivative $\partial_i f(x)$ satisfies
\[
f^i_x(d) = f^i_x(0) + \partial_i f(x) d
\]
Since $f^i_x(d) = f(x_0,\dots,x_i + d,\dots,x_{n-1})$ the theorem is obvious.
\end{proof}
One has various trivial \href{https://github.com/riccardobrasca/SDG/blob/1a787e306e89ce6ac0e91ddbcbf489eedeab32b7/SDG/IsKockLawvere_one/PartialDeriv.lean#L77-L156}{results} about partial derivatives, for example linking the derivative of a function of one variable $f \colon R \to R$ with its (unique) partial derivative. The \href{https://github.com/riccardobrasca/SDG/blob/1a787e306e89ce6ac0e91ddbcbf489eedeab32b7/SDG/IsKockLawvere_one/PartialDeriv.lean#L180-L186}{following} is a less trivial, important property of partial derivatives.
\begin{proposition}\label{partials_commute}
For any function $f\colon R^n \to R$,
\[
\partial_i \partial_j f=\partial_j \partial_i f
\]
\end{proposition}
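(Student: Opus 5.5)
The plan is to prove the equality pointwise and cancel infinitesimals. Fix $x \in R^n$ and indices $i,j<n$. The case $i=j$ is trivial, so assume $i \neq j$, and write $e_i$ for the $i^{th}$ standard basis vector. By two applications of Proposition~\ref{universaldcancels}, it suffices to show that for all $d_0, d_1 \in D$,
\[
\partial_i\partial_j f(x)\, d_0 d_1 = \partial_j \partial_i f(x)\, d_0 d_1 .
\]
First we fix $d_1$ and cancel $d_0$, viewing both sides as $(\partial_i\partial_j f(x) d_1) d_0$. Then we cancel $d_1$.

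To obtain this identity, we expand $f(x + d_0 e_i + d_1 e_j)$ in two ways using Theorem~\ref{taylor_one_multi}. For the first expansion, treat the $j$-th coordinate first. Applying the theorem at the point $x+d_0e_i$ in direction $j$ gives
\[
f(x+d_0e_i+d_1e_j) = f(x+d_0e_i) + \partial_j f(x+d_0e_i)\,d_1 .
\]
Now expand both terms in direction $i$ at $x$: we have $f(x+d_0e_i)=f(x)+\partial_i f(x)d_0$ and $\partial_j f(x+d_0e_i)=\partial_j f(x)+\partial_i\partial_j f(x)d_0$. This yields
\[
f(x)+\partial_i f(x)d_0+\partial_j f(x)d_1+\partial_i\partial_j f(x)d_0d_1 .
\]
For the second expansion, do the same with the roles of $i$ and $j$ swapped, first moving along $e_i$ from the point $x+d_1e_j$. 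This gives the same expression with $\partial_j\partial_i f(x)$ in the last term. Subtracting the two expansions gives the desired identity, and the cancellation argument above finishes the proof.

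The only delicate point is bookkeeping: identifying the points $(x+d_0e_i)+d_1e_j$ and $(x+d_1e_j)+d_0e_i$ as the same element of $R^n$. This means checking that "update coordinate $i$ by adding $d_0$" and "update coordinate $j$ by adding $d_1$" commute. It holds because $i\neq j$, so it is a coordinatewise check with a case split on the decidable equality of indices in $\mathrm{Fin}\,n$; that decidability is available constructively, so no choice is needed. The cancellation steps via Proposition~\ref{universaldcancels} are otherwise mechanical.
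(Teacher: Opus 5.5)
Your proposal is correct and follows the paper's proof essentially verbatim. The paper also expands $f(x_0,\dots,x_i+d_i,\dots,x_j+d_j,\dots,x_{n-1})-f(x)$ in two ways via Theorem~\ref{taylor_one_multi}, moving along one coordinate and then the other, and compares the results to get $\partial_i\partial_j f(x)\,d_id_j=\partial_j\partial_i f(x)\,d_id_j$; it then cancels the two universally quantified infinitesimals with Proposition~\ref{universaldcancels}. The only differences are cosmetic: the paper reduces to $i<j$ by symmetry and cancels $d_j$ before $d_i$, and your remark about commuting coordinate updates is a detail the paper leaves implicit.
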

\begin{proof}
Let $x \in R^n$, and $d_i,d_j \in D$. We can suppose $i < j$. On the one hand we have that
\begin{gather*}
f(x_0,\dots,x_i+d_i,\dots,x_j+d_j,\dots,x_{n-1})-f(x)=\\
f(x_0,\dots,x_i+d_i,\dots,x_j+d_j,\dots,x_{n-1})-\\
f(x_0,\dots,x_i+d_i,\dots,x_j,\dots,x_{n-1})+\\
f(x_0,\dots,x_i+d_i,\dots,x_j,\dots,x_{n-1})-f(x)=\\
\partial_j f (x_0,\dots,x_i+d_i,\dots,x_j,\dots,x_{n-1})d_j+\partial_i f(x)d_i =\\
\partial_j f(x)d_j+\partial_i \partial_j f(x) d_i d_j+\partial_i f(x)d_i
\end{gather*}
But on the other hand we have that
\begin{gather*}
f(x_0,\dots,x_i+d_i,\dots,x_j+d_j,\dots,x_{n-1})-f(x)=\\
f(x_0,\dots,x_i+d_i,\dots,x_j+d_j,\dots,x_{n-1})-\\
f(x_0,\dots,x_i,\dots,x_j + d_j,\dots,x_{n-1})+\\
f(x_0,\dots,x_i,\dots,x_j + d_j,\dots,x_{n-1})-f(x)=\\
\partial_i f(x)d_i+\partial_j \partial_i f(x) d_j d_i+\partial_j f(x)d_j
\end{gather*}
From these computations we deduce that
\[
\partial_i \partial_j f(x) d_i d_j=\partial_j \partial_i f(x) d_i d_j
\]
Hence after canceling the universally quantified $d_j$, followed by the $d_i$, it follows that
\[
\partial_i \partial_j f(x)=\partial_j \partial_i f(x)
\]
\end{proof}
\subsection{Taylor theorems in several variables}
From now on we assume that $R$ is a $\Q$-algebra and that it is Kock--Lawvere.

The below result appears neither in the text \cite{SDG} nor in our formalization. Also, we have not used this theorem in any result of the paper. However, we thought it would be instructive to keep it to demonstrate two different styles of proof. Theorem \ref{taylor_n_multi_aux} is a special case of Proposition \ref{taylor_multi_aux} (see the paragraph preceding that proposition), and it played the role of guiding principle for proving all other multivariate Taylor theorems.

For any $f \colon R^n \to R$, any $x \in R^n$ and any $H \subseteq \{0,\dots,n-1\}$, we write
\[
x^H=\prod_{j\in H} x_j \mbox{ and }\partial_H=\prod_{j\in H} \partial_j
\]
where the product means repeated application of the partial derivative in the obvious way. We also set $\partial_{\emptyset} f = f$.
\begin{theorem}\label{taylor_n_multi_aux}
Let $f\colon R^n \to R$ be a function. For any $x\in R^n$ and $d\in D^n$, we have
\[
f(x+d)=\sum_{H \subseteq \{0,\dots,n-1\}} \partial_H f(x) d^H
\]
\end{theorem}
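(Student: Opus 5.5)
I will argue by induction on the number of coordinates that have been shifted, applying Theorem \ref{taylor_one_multi} one coordinate at a time, much as in the proof of Proposition \ref{partials_commute}. For $m \le n$ write $x + d^{(m)}$ for the point $(x_0+d_0,\dots,x_{m-1}+d_{m-1},x_m,\dots,x_{n-1})$. The statement to prove by induction on $m$, simultaneously for all functions $g\colon R^n\to R$ and all $x \in R^n$, is
\[
g(x+d^{(m)})=\sum_{H \subseteq \{0,\dots,m-1\}} \partial_H g(x)\, d^H .
\]
Taking $m=n$ gives the theorem.

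The base case $m=0$ is immediate: the only subset is $\emptyset$, $\partial_\emptyset g = g$ and $d^\emptyset = 1$. For the inductive step from $m$ to $m+1$, note that $x+d^{(m+1)}$ is obtained from $x+d^{(m)}$ by adding $d_m \in D$ to the $m^{th}$ coordinate. Theorem \ref{taylor_one_multi} then gives
\[
g(x+d^{(m+1)}) = g(x+d^{(m)}) + \partial_m g(x+d^{(m)})\, d_m .
\]
I apply the induction hypothesis to $g$ and, separately, to the function $\partial_m g$, both at the point $x$. This gives
\[
\sum_{H \subseteq \{0,\dots,m-1\}} \partial_H g(x)\, d^H + \sum_{H \subseteq \{0,\dots,m-1\}} \partial_H\partial_m g(x)\, d^H d_m .
\]
The induction hypothesis must be quantified over all functions, since it is used for $\partial_m g$ as well as for $g$.

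The subsets of $\{0,\dots,m\}$ split into those not containing $m$ and those of the form $H\cup\{m\}$ with $H \subseteq \{0,\dots,m-1\}$. The first sum above is exactly the contribution of subsets not containing $m$. In the second sum we have $d^H d_m = d^{H\cup\{m\}}$, so it remains to identify $\partial_H\partial_m g$ with $\partial_{H\cup\{m\}} g$. This step needs care, because $\partial_H$ is only defined as an iterated application in some order. Here I invoke Proposition \ref{partials_commute}: iterated partial derivatives can be reordered by repeatedly swapping adjacent ones, so $\partial_H$ is independent of the chosen order and $\partial_H\partial_m = \partial_{H\cup\{m\}}$.

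The main obstacle I expect is this order-independence of $\partial_H$ for a finite set $H$. Proposition \ref{partials_commute} only handles two adjacent operators, so it has to be promoted to arbitrary permutations, for example by induction on $|H|$ using adjacent transpositions. One way to avoid most of this is to define $\partial_H$ by applying the indices of $H$ in increasing order. Since $m$ exceeds every element of $H$, in the identity $\partial_H\partial_m g = \partial_{H\cup\{m\}} g$ the operator $\partial_m$ is applied first, which is the position of smallest index. Reconciling this with the increasing-order convention still needs one commutation lemma: $\partial_m$ commutes with $\partial_H$. That lemma follows from Proposition \ref{partials_commute} by induction on $|H|$. The remaining bookkeeping, splitting the sum over subsets of $\{0,\dots,m\}$, is routine.
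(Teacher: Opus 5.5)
Your proof is correct and is essentially the paper's argument: you peel off one shifted coordinate with Theorem \ref{taylor_one_multi}, apply the induction hypothesis to both the function and its partial derivative in that coordinate, and split the subsets according to whether they contain the new index. The differences are minor. Inducting on the number $m$ of shifted coordinates inside a fixed $R^n$ spares you the paper's tacit identification of the partial derivatives of $y\mapsto f(y,x_n)$ with those of $f$. Also, the identity $\partial_H\partial_m g=\partial_{H\cup\{m\}}g$, which you treat as the main obstacle and the paper leaves implicit, holds directly by definition, without Proposition \ref{partials_commute}, if $\partial_H$ applies the largest index of $H$ first rather than the smallest.
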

\begin{proof}
We induct on $n$. If $n=0$, this is obvious.

Assume that the result is true for $n$. Let $f\colon R^{n+1} \to R$ be any function, let $x\in R^{n+1}$ and $d\in D^{n+1}$. Then,
\begin{gather*}
f(x+d)=f(x+d)-f(x_0+d_0,\dots,x_{n-1}+d_{n-1},x_n)+\\
f(x_0+d_0,\dots,x_{n-1}+d_{n-1},x_n)=\\
f(x_0+d_0,\dots,x_{n-1}+d_{n-1},x_n)+\\
\partial_n f(x_0+d_0,\dots,x_{n-1}+d_{n-1},x_n) d_n
\end{gather*}
where we used Theorem \ref{taylor_one_multi} to get the second term in the sum. Now, observe that by the induction hypothesis, fixing the last variable, we have that
\[
f(x_0+d_0,\dots,x_{n-1}+d_{n-1},x_n)=\sum_{H \subseteq \{0,\dots,n-1\}} \partial_H f(x) d^H
\]
where to write $d^H$ we view a given $H \subseteq \{0,\dots,n-1\}$ as a subset of $\{0,\dots,n\}$ with the last component being $0$. Similarly, by the induction hypothesis, we have that
\[
\partial_n f(x_0+d_0,\dots,x_{n-1}+d_{n-1},x_n)=\sum_{H\subseteq \{0,\dots,n-1\}} \partial_H \partial_n f(x) d^H
\]
In particular
\begin{gather*}
f(x+d)=\sum_{H\subseteq \{ 0,\dots,n-1\}} \partial_H f(x) d^H +
\sum_{H\subseteq \{0,\dots,n-1\}} \partial_H \partial_n f(x) d^H d_n =\\
\sum_{H\subseteq \{0,\dots,n-1\}} \partial_H f(x) d^H + \sum_{H = H' \cup \{n\}, H' \subseteq \{0,\dots,n-1\}} \partial_H f(x) d^H
\end{gather*}
so it follows that
\begin{gather*}
f(x+d)= \sum_{H\subseteq \{0,\dots,n\}} \partial_H f(x) d^H
\end{gather*}
as required
\end{proof}
For any pair of multi-indices $\alpha, \beta \in \N^a$, let us write $\alpha \leq \beta$ if and only if $\alpha_i \leq \beta_i$ for all $i$. Moreover, define $|\alpha|=\sum_{j < a} \alpha_j$ and $\alpha!=\prod_{j < a} \alpha_j !$.

Let us consider now a function $f \colon R^n \to R$ and $\alpha \in \N^a$, where $a \leq n$. For any $i < a$, we have the higher partial derivative $\partial_i^{\alpha_i}$, and repeatedly applying those operators to $f$ we get the \href{https://github.com/riccardobrasca/SDG/blob/1a787e306e89ce6ac0e91ddbcbf489eedeab32b7/SDG/IsKockLawvere/TaylorMulti.lean#L48-L50}{mixed derivative}
\[
\partial[\alpha] f
\]
\begin{remark}
As already explained in Remark \ref{higherderiv} and \ref{higherderivpartial}, our notation differs from the usual analysis one, where the mixed derivative above would be denoted as
\[
\frac{\partial^{|\alpha|}f}{\partial x^{\alpha}}=\frac{\partial^{|\alpha|}f}{\partial x^{\alpha_0}_0 \dots \partial x^{\alpha_{a-1}}_{a-1}}
\]
This is essentially impossible to avoid in the formalization process. Also, note that the additional flexibility of taking a multi-index $\alpha \in \N^a$ with $a \leq n$ instead of just $\alpha \in \N^n$ is usually ignored in informal mathematics, as one can just set the last components of $\alpha$ equal to $0$. Since various formal proofs will be by induction on $a$, this mathematically trivial generalization is important for us.
\end{remark}

The \href{https://github.com/riccardobrasca/SDG/blob/1a787e306e89ce6ac0e91ddbcbf489eedeab32b7/SDG/IsKockLawvere/TaylorMulti.lean#L181-L185}{following} generalizes Theorem \ref{taylor_n_multi_aux} to the infinitesimal product neighborhood of $D_{k_i}$. Note that Theorem \ref{taylor_n_multi_aux} is the special case of Proposition \ref{taylor_multi_aux} when $k_i = 1$ for all $i$.
\begin{proposition}{(Theorem 5.2 \cite{SDG})}\label{taylor_multi_aux}
Let $k=(k_0,\dots,k_{n-1}) \in \N^n$, and $f\colon R^n \to R$ be any function. For all $x\in R^n$ and $d\in D_{k_0} \times \dots \times D_{k_{n-1}}$ we have
\[
f(x+d)=\sum_{\alpha \leq k} \partial[\alpha]f(x) \frac{d^{\alpha}}{\alpha!}
\]
\end{proposition}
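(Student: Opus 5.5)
We argue by induction on the number of variables, in the style of the proof of Theorem \ref{taylor_n_multi_aux}, with the one-variable Theorem \ref{taylor_k} replacing Theorem \ref{taylor_one_multi}. To make the induction work we use the flexibility noted in the remark: for $a \leq n$ and $k \in \N^a$ we prove that, for every $f\colon R^n \to R$, every $x \in R^n$ and every $d \in D_{k_0}\times\dots\times D_{k_{a-1}}$ (extended by zeros to an element of $R^n$),
\[
f(x+d)=\sum_{\alpha \leq k} \partial[\alpha]f(x)\frac{d^{\alpha}}{\alpha!}.
\]
The sum runs over $\alpha \in \N^a$ and $d^\alpha=\prod_{j<a} d_j^{\alpha_j}$. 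The case $a=n$ is the proposition. For $a=0$ the sum has the single term $\alpha=()$, and the statement reduces to $f(x)=f(x)$.

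For the inductive step, let $k \in \N^{a+1}$ with $a+1 \leq n$. Write $d=(d',d_a)$ with $d' \in D_{k_0}\times\dots\times D_{k_{a-1}}$, and let $y = x+d'$, where $d'$ is padded with zeros. Then $x+d$ is obtained from $y$ by adding $d_a \in D_{k_a}$ in coordinate $a$. Apply Theorem \ref{taylor_k} to the one-variable function $t \mapsto f(y_0,\dots,y_a+t,\dots,y_{n-1})$. This needs the auxiliary fact that the iterated derivative of this function, evaluated at $0$, equals $\partial_a^m f(y)$. That fact follows by induction on $m$ from the uniqueness parts of Theorems \ref{taylor_one} and \ref{taylor_one_multi}, and it is one of the routine results linking derivatives and partial derivatives. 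We obtain
\[
f(x+d)=\sum_{m<k_a+1}\partial_a^m f(x+d')\frac{d_a^m}{m!}.
\]
Next apply the induction hypothesis, with index $(k_0,\dots,k_{a-1})$, to each function $\partial_a^m f$ at the point $x$. This gives
\[
\partial_a^m f(x+d')=\sum_{\alpha'\leq (k_0,\dots,k_{a-1})}\partial[\alpha']\partial_a^m f(x)\frac{d'^{\alpha'}}{\alpha'!}.
\]

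It remains to reassemble the double sum. Pairs $(\alpha',m)$ with $\alpha' \leq (k_0,\dots,k_{a-1})$ and $m\leq k_a$ correspond bijectively to multi-indices $\alpha=(\alpha',m)\in\N^{a+1}$ with $\alpha\leq k$. Under this correspondence we have $d'^{\alpha'}d_a^m=d^\alpha$ and $\alpha'!\,m!=\alpha!$. We also need $\partial[\alpha']\partial_a^m f=\partial[(\alpha',m)] f$. Depending on the order in which $\partial[\cdot]$ applies the operators, this holds either by definition or after commuting $\partial_a^m$ past the $\partial_i^{\alpha_i}$, which Proposition \ref{partials_commute} allows after an easy induction on the exponents.

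The main obstacle is this final bookkeeping rather than any mathematical idea: splitting the finite sum over $\{\alpha\in\N^{a+1} : \alpha\le k\}$ as a product of the sum over $\alpha'$ with the sum over $m$ (a Finset product/sigma decomposition), and matching the recursive definition of $\partial[\alpha]$ with the order of differentiation. I would first fix the definition of $\partial[\alpha]$ so that it peels off the last index, making $\partial[(\alpha',m)]f=\partial[\alpha']\partial_a^m f$ definitional. This avoids Proposition \ref{partials_commute} entirely, and what remains is a choice-free reindexing of finite sums.
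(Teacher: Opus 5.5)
Your proof is correct and follows the paper's skeleton. You expand in the last coordinate with Theorem \ref{taylor_k}, apply the induction hypothesis to the iterated partial derivatives in that coordinate, reindex along the bijection between $\{\alpha\le k\}$ and $\{\alpha'\le(k_0,\dots,k_{a-1})\}\times\{0,\dots,k_a\}$, and identify the mixed derivatives (by definition or via Proposition \ref{partials_commute}).

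The real difference is bookkeeping. The paper inducts on $n$ and works with the functions $g\colon R\to R$ and $h_i\colon R^n\to R$ obtained by freezing variables, so it also needs that the partial derivatives of $y\mapsto F(y,x_n)$ are the restrictions of those of $F$. You instead keep $f\colon R^n\to R$ fixed in dimension and induct on the length $a\le n$ of the multi-index, applying the hypothesis directly to $\partial_a^m f$; this is exactly the generalization the paper's remark says its formal proofs rely on.
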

\begin{proof}
We prove the theorem by induction on $n$. If $n = 0$ the theorem is trivial.

The steps of the proof are as follows: firstly, we introduce a function $g$ of one variable from $f$, fixing the first $n$ variables and we apply Theorem \ref{taylor_k} to $g$. Secondly, we use the induction hypothesis on certain functions of $n$ variables obtained by fixing the last variable in various derivatives of $f$ (see the functions $h_i$ below). The final step is to use a reindexing function to obtain the desired form.

Let $f \colon R^{n+1} \to R$ be a function and let $k$, $x$ and $d$ be as in the statement. We set $\tilde k = (k_0, \ldots, k_{n-1})$ and similarly for $x$ and $d$. We introduce the function
\begin{gather*}
g \colon R \to R \\
y \mapsto f(\tilde x + \tilde d, y)
\end{gather*}
We have $f(x + d) = g(x_n + d_n)$, so by Theorem \ref{taylor_k} we get
\[
f(x + d) = g(x_n + d_n) = \sum_{i < k_n + 1}\partial^ig(x_n) \frac{d_n^i}{i!}
\]
Since by definition
\[
\partial^i g(x_n)=\partial_n^i f (\tilde x + \tilde d, x_n)
\]
we obtain
\[
f(x + d) = \sum_{i < k_n + 1} \partial_n^i f (\tilde x + \tilde d, x_n)\frac{d_n^i}{i!}
\]
For all $i$, we define a function
\begin{gather*}
h_i \colon R^n \to R \\
y \mapsto \partial_n^i f(y, x_n)
\end{gather*}
By definition we have
\[
h_i(\tilde x + \tilde d) = \partial_n^i f(\tilde x + \tilde d, x_n)
\]
so
\[
f(x + d) = \sum_{i < k_n + 1} h_i(\tilde x + \tilde d)\frac{d_n^i}{i!}
\]
and we can apply the induction hypothesis to each $h_i$, obtaining
\[
f(x + d) = \sum_{i < k_n + 1}\sum_{\tilde \alpha \leq \tilde k} \partial[\tilde \alpha]h_i(\tilde x) \frac{\tilde d^{\tilde \alpha}}{\tilde \alpha!}\frac{d_n^i}{i!}
\]
(here $\tilde \alpha$ is just a notation, there is no $\alpha$.) Since there is a bijection from the set $\{\alpha \leq k\}$ to $\{0, \ldots, k_n \} \times \{\tilde \alpha \leq \tilde k \}$ we get, after reindexing,
\[
f(x + d) = \sum_{\alpha \leq k} \partial[\tilde \alpha]h_{\alpha_n}(\tilde x) \frac{\tilde d^{\tilde \alpha}}{\tilde \alpha!}\frac{d_n^{\alpha_n}}{\alpha_n!}
\]
So it is enough to prove that for all $\alpha \leq k$ we have
\[
\partial[\tilde \alpha]h_{\alpha_n}(\tilde x) = \partial[\alpha]f(x)
\]
and this is true by definition of partial derivatives and the fact that they commute (see Proposition \ref{partials_commute}).
\end{proof}
The \href{https://github.com/riccardobrasca/SDG/blob/1a787e306e89ce6ac0e91ddbcbf489eedeab32b7/SDG/IsKockLawvere/TaylorMulti.lean#L210-L222}{following} is the main result of our paper and it is the synthetic analogue of Theorem \ref{TaylorThm}.
\begin{theorem}\label{taylor_multi}
Let $k=(k_0,\dots,k_{n-1}) \in \N^n$, and $f\colon R^n \to R$ be any function. For all $x\in R^n$ and $d\in D_{k_0} \times \dots \times D_{k_{n-1}}$, we have
\[
f(x+d)=\sum_{|\alpha| \leq |k|} \partial[\alpha]f(x) \frac{d^{\alpha}}{\alpha!}
\]
\end{theorem}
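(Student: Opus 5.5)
The plan is to deduce the statement from Proposition \ref{taylor_multi_aux}, by showing that the sum over $\{|\alpha| \leq |k|\}$ and the sum over $\{\alpha \leq k\}$ agree term by term once the vanishing terms are discarded. Both sums are over finite sets: $\{\alpha \in \N^n : |\alpha| \leq |k|\}$ is finite, and it contains $\{\alpha \leq k\}$, since $\alpha \leq k$ implies $|\alpha| \leq |k|$. So we can write
\[
\sum_{|\alpha| \leq |k|} \partial[\alpha]f(x)\frac{d^\alpha}{\alpha!} = \sum_{\alpha \leq k} \partial[\alpha]f(x)\frac{d^\alpha}{\alpha!} + \sum_{|\alpha|\leq |k|,\ \alpha \not\leq k} \partial[\alpha]f(x)\frac{d^\alpha}{\alpha!}.
\]
By Proposition \ref{taylor_multi_aux}, the first sum equals $f(x+d)$. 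It therefore remains to show that every term of the second sum vanishes.

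For the second sum, take $\alpha$ with $\alpha \not\leq k$. Then there is an index $i$ with $\alpha_i > k_i$, that is, $\alpha_i \geq k_i+1$. Since $d_i \in D_{k_i}$, we have $d_i^{\alpha_i} = d_i^{k_i+1} d_i^{\alpha_i-k_i-1} = 0$. Hence $d^\alpha = \prod_j d_j^{\alpha_j} = 0$, and the term vanishes. Constructively, the existence of such an $i$ is unproblematic: order on $\N$ is decidable, so for each $\alpha$ in the finite index set we can decide whether $\alpha \leq k$ or exhibit an $i$ with $k_i < \alpha_i$. This is exactly what is needed to split the sum in a choice-free way, using a decidable filter on the finite set.

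The main obstacle is bookkeeping rather than mathematics, namely setting up the index sets. I would represent $\{|\alpha|\leq|k|\}$ as the filter of $\{\alpha \leq (|k|,\dots,|k|)\}$ (a box, which is a product of finite ranges) by the predicate $|\alpha| \leq |k|$. I would then show that $\{\alpha\leq k\}$ is a subset of it and use the sum-over-subset lemma: if the summand vanishes outside a subset, the sums agree. Here one must avoid any \mathlib lemma relying on classical decidability, supplying instead the decidable-equality and decidable-order instances on $\N^n$ explicitly, in the same way as with \lean{SDG.Finset.sum_le_sum}. Once this is in place, the proof reduces to Proposition \ref{taylor_multi_aux} together with the nilpotency observation above.
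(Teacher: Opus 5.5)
Your proposal is correct and matches the paper's proof: the sum over $|\alpha|\le|k|$ splits into the part over $\alpha\le k$, which equals $f(x+d)$ by Proposition \ref{taylor_multi_aux}, and the complementary set $S$, where every term vanishes because $d_i^{\alpha_i}=0$ for some $i$ with $\alpha_i>k_i$. Your additional remarks on decidability and choice-free bookkeeping are consistent with the paper's constructive setting, but they go beyond what the informal argument requires.
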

\begin{proof}
Let $S$ be the set of all $\alpha \in \N^n$ such that $|\alpha| \leq |k|$ with $\alpha_i > k_i$ for at least one $i$. Then, since for any $\alpha \in S$, we have that $d^{\alpha_i}_i=0$ for some $i$ and in particular $d^{\alpha}=0$, it follows that
\begin{gather*}
f (x + d) = \sum_{\alpha \leq k} \partial[\alpha]f(x) \frac{d^{\alpha}}{\alpha!} = \\
\sum_{\alpha \in S} \partial[\alpha]f(x) \frac{d^{\alpha}}{\alpha!}+\sum_{\alpha \leq k} \partial[\alpha]f(x) \frac{d^{\alpha}}{\alpha!}= \sum_{|\alpha| \leq |k|} \partial[\alpha]f(x) \frac{d^{\alpha}}{\alpha!}
\end{gather*}
where we used Proposition \ref{taylor_multi_aux} in the first line.
\end{proof}
In fact, in Theorem \ref{taylor_multi}, we could have simply taken the summation index set to be $\N^n$. Note that the sum is actually finite.
\begin{corollary}\label{cor_taylor}
Let $f\colon R^n \to R$ be any function. For all $x\in R^n$ and $d\in D_{k_0} \times \dots \times D_{k_{n-1}}$, we have
\[
f(x+d)=\sum_{\alpha} \partial[\alpha]f(x) \frac{d^{\alpha}}{\alpha!}
\]
\end{corollary}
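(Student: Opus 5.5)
The plan is to deduce the corollary from Proposition \ref{taylor_multi_aux} (or equivalently Theorem \ref{taylor_multi}) by showing that every term with $\alpha \not\leq k$ vanishes. First I make precise what the sum over all $\alpha \in \N^n$ means. It is the sum over the support of $\alpha \mapsto \partial[\alpha]f(x)\frac{d^\alpha}{\alpha!}$, and this support is contained in the finite set $\{\alpha \leq k\}$. This inclusion is the key point. If $\alpha \not\leq k$, then $\alpha_i > k_i$ for some $i$, i.e.\ $\alpha_i \geq k_i+1$. Since $d_i \in D_{k_i}$, we get $d_i^{\alpha_i} = d_i^{k_i+1} d_i^{\alpha_i - k_i - 1} = 0$. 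Hence $d^\alpha = \prod_{j<n} d_j^{\alpha_j} = 0$ (the product contains a zero factor, in the spirit of Lemma \ref{closure_prod}), so the whole term vanishes.

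With the support contained in $\{\alpha \leq k\}$, the infinite (finitely supported) sum equals the finite sum over $\{\alpha \leq k\}$. Proposition \ref{taylor_multi_aux} identifies that finite sum with $f(x+d)$, which proves the corollary.

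Formally I would use the general principle that a sum over a type restricted to a finite set containing the support equals the full finitely supported sum (a \lean{finsum}-style lemma). The vanishing argument above supplies the support hypothesis.

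The main obstacle is not mathematical but foundational. In our constructive setting the support of a function is not decidable, since we cannot decide whether $\partial[\alpha]f(x)\frac{d^\alpha}{\alpha!}$ is zero. So \lean{finsum}-style infrastructure, which defines the support classically, may depend on choice. I would therefore interpret ``$\sum_\alpha$'' via an explicit witness: a sum over any finset $T \supseteq \{\alpha \leq k\}$, shown to be independent of $T$. Concretely, I would split $T$ into $\{\alpha \leq k\}$ and its complement inside $T$. This split is decidable because $\leq$ on $\N^n$ is decidable. The complement contributes zero by the vanishing argument, and Proposition \ref{taylor_multi_aux} finishes the proof, exactly as in the proof of Theorem \ref{taylor_multi} with $T = \{|\alpha| \leq |k|\}$.
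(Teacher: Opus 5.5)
Your proposal is correct and is essentially the paper's argument. The paper gives no separate proof, but the corollary follows from Proposition \ref{taylor_multi_aux} exactly as Theorem \ref{taylor_multi} does: any $\alpha \not\leq k$ has some $\alpha_i > k_i$, which forces $d_i^{\alpha_i}=0$ and hence $d^\alpha=0$, so the sum over $\N^n$ collapses to the finite sum over $\{\alpha \leq k\}$. Reading $\sum_\alpha$ as a sum over an arbitrary finite $T \supseteq \{\alpha \leq k\}$, rather than a classically defined finitely supported sum, is a sensible way to make the paper's remark that ``the sum is actually finite'' precise constructively.
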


\bibliographystyle{amsalpha}
\bibliography{references}
\end{document}